\def\be{\begin{equation}}
\def\ee{\end{equation}}
\def\ba{\begin{align}}
\def\ea{\end{align}}
\def\bsplit{\begin{split}}
\def\esplit{\end{split}}
\def\bm{\begin{multline}}
\def\eem{\end{mutline}}
\def\bfig{\begin{figure}[htb]}
\def\efig{\end{figure}}
\newtheorem{theorem}{Theorem}[section]
\newtheorem{proposition}[theorem]{Proposition}
\newtheorem{lemma}[theorem]{Lemma}
\newcommand{\nn}{\nonumber}
\renewcommand{\leq}{\;\leqslant\;}
\renewcommand{\geq}{\;\geqslant\;}
\newcommand{\dd}{{\rm d}}
\newcommand{\e}[1]{\,{\rm e}^{#1}\,}
\def\Tr{{\operatorname{Tr\,}}}
\newcommand{\upchi}{\raise 2pt \hbox{$\chi$}}
\newcommand{\caS}{{\mathcal S}}
\newcommand{\bbR}{{\mathbb R}}
\newcommand{\bsomega}{{\boldsymbol\omega}}
\newcommand{\bfomega}{{\boldsymbol\omega}}
\begin{document}


\title[Upper Bound on the Critical Temperature of a Bose gas]{Rigorous
  Upper Bound on the Critical Temperature of Dilute Bose Gases}

\author{Robert Seiringer}
\affiliation{Department of Physics, Princeton University, Princeton, NJ 08544, USA}
\email{rseiring@princeton.edu}

\author{Daniel Ueltschi}
\affiliation{Department of Mathematics, University of Warwick, Coventry, CV4 7AL, England}
\email{daniel@ueltschi.org}

\begin{abstract}
  We prove exponential decay of the off-diagonal correlation function
  in the two-dimensional homogeneous Bose gas when $a^2\rho$ is small
  and the temperature $T$ satisfies
\[
T >\frac{ 4\pi \rho }{\ln |\ln(a^2\rho)|}\,.
\]
Here, $a$ is the scattering length of the repulsive interaction
potential and $\rho$ is the density. To leading order in $a^2\rho$,
this bound agrees with the expected critical temperature for
superfluidity. In the three-dimensional Bose gas, exponential decay is
proved when
\[
\frac{T - T_{\rm c}^{(0)}}{T_{\rm c}^{(0)}} > 5 \sqrt{a \rho^{1/3}}\,,
\]
where $T_{\rm c}^{(0)}$ is the critical temperature of the ideal
gas. While this condition is not expected to be sharp, it gives a
rigorous upper bound on the critical temperature for Bose-Einstein
condensation.  \vspace{1mm}
\end{abstract}

\keywords{Dilute Bose gas, Bose-Einstein condensation, off-diagonal
  long-range order, scattering length} \pacs{05.70.Fh, 03.75.Hh,
  05.30.Jp}

\maketitle

\section{Introduction}
\label{sec intro}

Quantum many-body effects due to particle interactions and quantum
statistics make the Bose gas a fascinating system and a challenge to
theoretical physics.  It is increasingly relevant to experimental
physics, especially after the first realization of Bose-Einstein
condensation in cold atomic gases. \cite{AEMWC,DMAVDKW} It displays a
stunning physical phenomenon: superfluidity. Several mechanisms that
are present in the Bose gas also play a r\^ole in interacting
electronic systems and in quantum optics.

Both the two-dimensional and the three-dimensional gas have physical
relevance, and they behave rather differently. We consider them
separately here. Throughout the paper, we shall assume that units are
chosen in such a way that $\hbar=2m=k_{\rm B}=1$, where $m$ is the particle
mass.

\subsection{The two-dimensional Bose gas}

There is no Bose-Einstein condensation in the two-dimensional Bose gas
at positive temperature, as was proved by Hohenberg more than forty
years ago.\cite{Hoh} In contrast to higher dimensions, the
ideal Bose gas offers no intriguing features in two dimensions. But the
interacting gas is expected to display a Kosterlitz-Thouless type
transition from a normal fluid to a superfluid, where the decay of
off-diagonal correlations goes from exponential to power law.  The
critical temperature $T_{\rm c}$ depends on the scattering length $a$
of the interaction potential, which we consider to be repulsive. For
dilute gases, i.e.\ when $a^2 \rho \ll 1$, Popov \cite{Pop} performed
diagrammatic expansions in a functional integral approach, finding
that \be
\label{2D conjecture}
T_{\rm c} \approx \frac{ 4\pi \rho }{\ln |\ln(a^2\rho)|}\,.  
\ee 
This
formula was confirmed by Fisher and Hohenberg \cite{FH} using
Bogoliubov's theory, and by Pilati et al. \cite{PGP} using Monte-Carlo
simulations. No rigorous proof is available to this date, however.

In this article we prove in a mathematically rigorous fashion that
there is exponential decay of the off-diagonal correlation function
when the temperature satisfies \be
\label{ineq no BEC 2d}
T \geq \frac {4\pi \rho}{\ln | \ln(a^2\rho)|}\left( 1 + O \left(
    \frac{\ln\ln|\ln(a^2\rho)|}{\ln |\ln(a^2\rho)|} \right) \right)\ee
for small $a^2\rho$.  Thus we prove that $T_{\rm c}$ cannot be bigger
than the conjectured value \eqref{2D conjecture}, to leading order in $a^2
\rho$. The main novel ingredient in our proof is a rigorous
bound on the grand-canonical density of the interacting Bose
gas. This is explained in the next section.

\subsection{The three-dimensional Bose gas}

A three-dimensional Bose gas is interesting even in the absence of
particle interactions. Bose-Einstein condensation takes place at the
critical temperature $T_{\rm c}^{(0)} = 4\pi
(\rho/\zeta(\frac32))^{2/3}$ (where $\zeta(\frac32) \approx 2.612$, with
$\zeta$ the Riemann zeta function). The effects of particle
interactions on the critical temperature have been studied by many
authors. A consensus has been reached in recent years but it is
tenuous; we give a survey of the main results, both for historical
perspective and in order to gain a sense of the solidity of the
consensus. Let $\Delta T_{\rm c} = T_{\rm c} - T_{\rm c}^{(0)}$ denote
the change of the critical temperature.

\begin{itemize}
\item[1953] Feynman \cite{Fey} argued that interactions increase the effective mass of the particles and hence decrease $T_{\rm c}$, i.e,  $\Delta T_{\rm c} < 0$.
\item[1958] Lee and Yang \cite{LYang} predict that the change of critical temperature is linear in the scattering length, namely
\[
\Delta T_{\rm c} / T_{\rm c}^{(0)} \approx c\, a \rho^{1/3}.
\]
No information on the constant $c$ is provided, not even its sign.
\item[1960] Glassgold, Kaufman, and Watson \cite{GKW} find that the critical temperature increases as $\Delta T_{\rm c}/T_{\rm c}^{(0)} \approx C (a \rho^{1/3})^{1/2}$ with $C>0$.
\item[1964] Huang \cite{Hua1} gives an argument suggesting that $\Delta T_{\rm c}/T_{\rm c}^{(0)} \approx C (a \rho^{1/3})^{3/2}$ with $C>0$.
\item[1971] A Hartree-Fock computation shows that $\Delta T_{\rm c} < 0$ (Fetter and Walecka \cite{FW}).
\item[1982] A loop expansion of the quantum field representation gives $\Delta T_{\rm c}/T_{\rm c}^{(0)} \approx -3.5 (a \rho^{1/3})^{1/2}$ (Toyoda \cite{Toy}).
\item[1992] By studying the evolution of the interacting Bose gas, Stoof \cite{Sto} finds that the change of critical temperature is linear in the scattering length with $c = 16 \pi / 3 \zeta(3/2)^{4/3} = 4.66$.
\item[1996] A diagrammatic expansion in the renormalization group yields $\Delta T_{\rm c} > 0$ (Bijlsma and Stoof \cite{BS}).
\item[1997] A path integral Monte-Carlo simulation yields $c=0.34\pm 0.06$ (Gr\"uter, Ceperley, and Lalo\"e \cite{GCL}).
\item[1999] A virial expansion leads to $c=0.7$ (Holzmann, Gr\"uter, and Lalo\"e \cite{HGL}). Another virial expansion leads Huang \cite{Hua2} to conclude that $\Delta T_{\rm c}/T_{\rm c}^{(0)} \approx 3.5 (a \rho^{1/3})^{1/2}$. Interchanging the limit $a\to0$ with the thermodynamic limit, and using Monte-Carlo simulations, Holzmann and Krauth \cite{HK} find $c=2.3\pm 0.25$. The dilute Bose gas can be mapped onto a classical field lattice model (Baym et.\ al.\ \cite{BBHLV}); a self-consistent approach then yields $c=2.9$.
\item[2000] An experimental realization by Reppy et.\ al.\ \cite{RCHCHZ} yields $c=5.1\pm 0.9$. It was later pointed out that the estimation of the scattering length between particles was not correct, however. \cite{KPS}
\item[2001] Arnold and Moore, \cite{AM} and Kashurnikov, Prokof'ev, and Svistunov \cite{KPS} performed numerical simulations on the equivalent classical field model; \cite{BBHLV} the former get $c=1.32\pm 0.02$ and the latter get $c=1.29\pm 0.05$.
\item[2003] A variational perturbation theory performed by Kleinert \cite{Kle} yields $c=1.14\pm 0.11$.
\item[2004] By studying the classical field model with variational perturbations, Kastening \cite{Kas} finds $c=1.27\pm 0.11$.  A path integral Monte-Carlo simulation by Nho and Landau \cite{NL} yields $c=1.32\pm 0.14$.
\end{itemize}
The last articles essentially agree with one another, and also with more recent articles. \cite{PGP}
The case for a linear
correction with constant $c \approx 1.3$ is made rather convincingly;
it is not beyond reasonable doubt, though. Notice that the constant
$c$ is {\it universal} in the sense that it does not depend on such
special features as the mass of the particles or the details of the
interactions. (The mass enters the scattering length
$a$, however.)

The question of the critical temperature for interacting Bose gases
is reviewed in Baym et.\ al. \cite{BBHLV2} and in Blaizot. \cite{Bla} A
comprehensive survey on many aspects of bosonic systems has been
written by Bloch, Dalibard, and Zwerger. \cite{BDZ}
This question is also mentioned in additional articles dealing with certain perturbation methods.
The value of $c$ is assumed to be known and its calculation serves to test the method.
Some of these references can be found in Blaizot. \cite{Bla}

In this article we give a partial rigorous justification of the
results in the literature by proving that off-diagonal
correlations decay exponentially when 
\be\label{ineq no BEC} \frac{T -
  T_{\rm c}^{(0)}}{T_{\rm c}^{(0)}} \geq 5.09 \sqrt{a \rho^{1/3}} \,
\left( 1 + O\big(\sqrt{a\rho^{1/3}}\big)\right)\,.  
\ee 
In particular,
there is no Bose-Einstein condensation when (\ref{ineq no BEC}) is
satisfied. This rigorous result is not sharp enough to disprove any of
the previous claims that have been just reviewed,
although it gets close to Huang's 1999 result. As in the
two-dimensional case, the proof is based on bounds of the
grand-canonical density for the interacting gas.

\subsection{Outline of this article}

In the next section, we shall explain how the exponential decay of
correlations can be deduced from appropriate lower bounds on the
particle density in the grand-canonical ensemble. These bounds will be
proved in the remaining sections. In Section~\ref{sec density
  bounds}, we shall state our main result, Theorem~\ref{thm density
  bounds}, and we shall explain the precise assumptions on the
interparticle interactions under which it holds. Our main tool is a
path integral representation which is explained in detail in
Section~\ref{sec FK}. Finally, in Section~\ref{sec scattering} we
investigate certain integrals of the difference between the heat kernel
of the Laplacian with and without potential, and obtain bounds that are
needed to complete the proof of Theorem~\ref{thm density
  bounds}.

\section{Decay of correlations}

We consider the grand-canonical ensemble at chemical potential $\mu$
and we denote the fugacity by $z = \e{\beta\mu}$. Let $\gamma(x,y) =
\langle a^\dagger(x) a(y) \rangle$ denote the reduced one-particle
density matrix of the interacting system, and $\gamma^{(0)}$ the one
of the ideal gas. An important fact is that, when the interactions are
repulsive, we have \be\label{imp} \gamma(x,y) \leq \gamma^{(0)}(x,y)
\ee for any $0<z<1$.  See Bratteli-Robinson, \cite{BR}
Theorem~6.3.17. In $d$ spatial dimensions,
\[
\gamma^{(0)}(x,y) = \sum_{n\geq1} \frac{z^n}{(4\pi\beta n)^{d/2}}
\e{-\frac{|x-y|^2}{4\beta n}} 
\]
which behaves like $\exp ( - \sqrt{-\beta^{-1}\ln z} \,
|x-y|)$ for large $|x-y|$. That is, off-diagonal correlations decay
exponentially fast when $z<1$. In particular, the critical fugacity satisfies $z_{\rm c}\geq 1$. 

Next, let $\rho(z)$ denote the grand-canonical density of the interacting system (it depends on $\beta$ as well, although the notation does not show it explicitly), and let
\be\label{rhozero}
\rho^{(0)}(z) = (4\pi\beta)^{-d/2} g_{d/2}(z)
\ee
the density of the ideal system. Here, the function $g_{d/2}$ is defined by
\be
\label{def g}
g_r(z) = \sum_{n\geq1} \frac{z^n}{n^r}\,.
\ee
The density $\rho(z)$ is increasing in $z$. Then a sufficient condition for the exponential decay of correlations is that, for some $z<1$,
\be
\label{the criterion}
\rho < \rho(z)\,.  \ee The obvious problem with this condition is that
the density $\rho(z)$ for the interacting system is not given by an
explicit function. Our way out is to obtain bounds for $\rho(z)$ (see
Theorem \ref{thm density bounds} below) and to use them with $z <1$
suitably chosen.

\subsection{Two dimensions}

We now explain the proof of exponential decay of correlations under the condition \eqref{ineq no BEC 2d} for $d=2$. We show below (see Theorem \ref{thm density bounds} and the following remarks) that the density satisfies the lower bound 
\be\label{rho2}
\rho(z) \geq \rho^{(0)}(z) - \frac C{4\pi\beta} \frac{|\ln(1-z)|}{1-z} \frac1{|\ln(a^2/\beta)|}\,,
\ee
for some constant $C>0$ and for $a\beta^{-1/2}$ small enough. Here, $a$ denotes the two-dimensional scattering length, which can be defined similarly to the three-dimensional case
via the solution of the zero-energy scattering equation.\cite{LY2d,LSSY}

In two dimensions, $\rho^{(0)}(z) = - (4\pi\beta)^{-1} \ln(1-z)$. For the choice $z=z_0$ with 
\[
z_0 = 1 - \frac{\ln |\ln(a^2/\beta)|}{|\ln(a^2/\beta)|}\,,
\]
the criterion \eqref{the criterion} is fulfilled when
\[
\rho \leq \frac{\ln |\ln(a^2/\beta)|}{4\pi\beta} \Bigl( 1 - O\Bigl( \frac{\ln\ln |\ln(a^2/\beta)|}{\ln |\ln(a^2/\beta)|} \Bigr) \Bigr).
\]
Since $\beta=1/T$, one can check that this is equivalent to the condition \eqref{ineq no BEC 2d}. 

The situation is illustrated in Fig.\ \ref{fig2D} with qualitative
graphs of $\rho^{(0)}(z)$ and $\rho(z)$. The critical fugacity $z_{\rm
  c}$ is known to be larger than 1. Our density bound holds for $z <
1$, and this yields the lower bound $\rho_0$ for the critical
density. It turns out to be equal to the conjectured
critical density (determined by Eq.\ \eqref{2D conjecture}) to leading order in the
small parameter $a\beta^{-1/2}$.

\bfig
\centerline{\includegraphics[width=70mm]{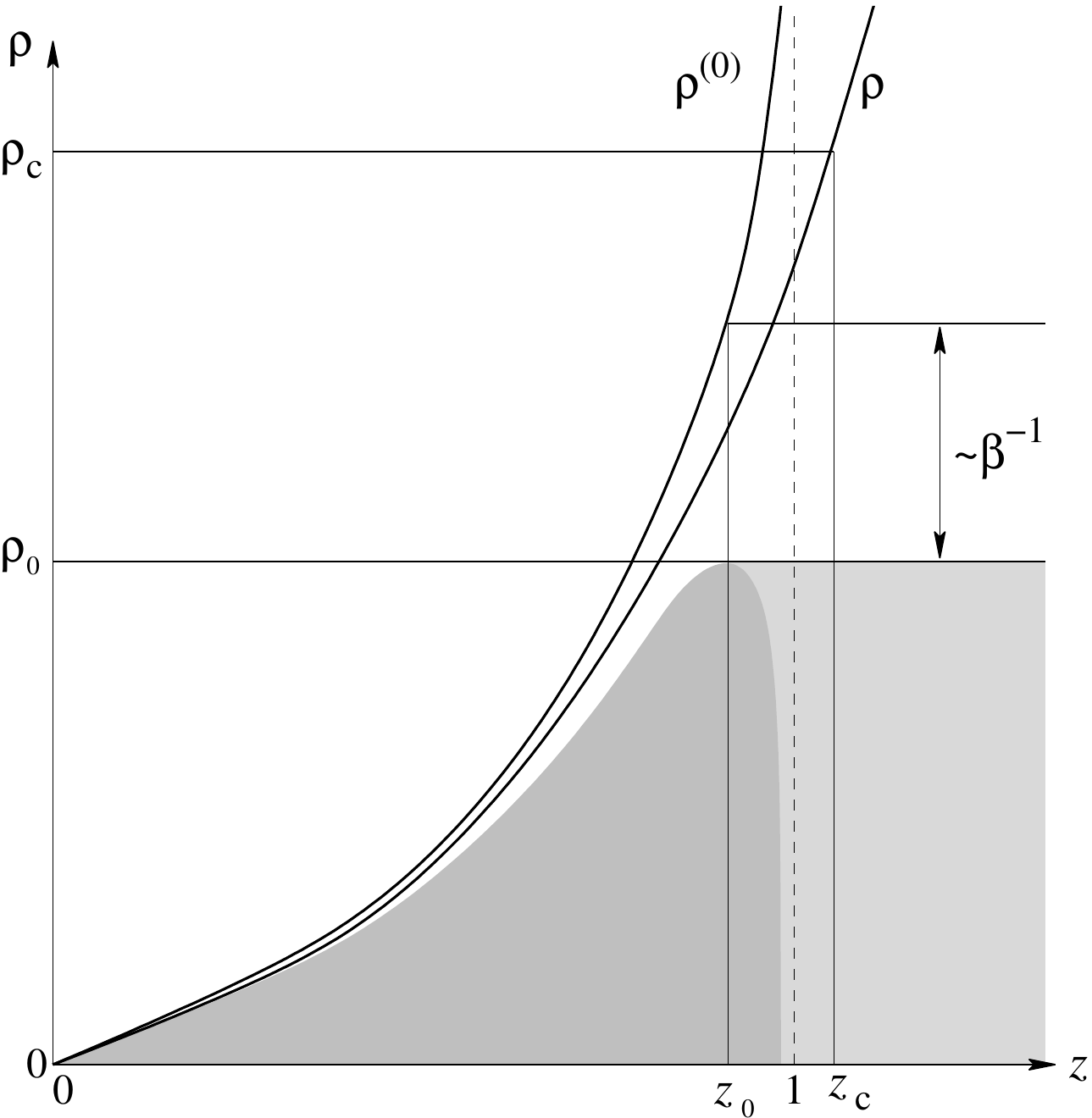}}
\caption{Qualitative graphs of the grand-canonical density for $d=2$. The shaded area represents our lower bound for the interacting density --- the darker area is the function defined in Eq.\ \eqref{rho2} and it extends to the lighter area by monotonicity of the grand-canonical density. Our lower bound $\rho_0$ for the critical density is obtained by choosing $z_0 = 1 - \frac{\ln |\ln(a^2/\beta)|}{|\ln(a^2/\beta)|}$.}
\label{fig2D}
\efig

\subsection{Three dimensions}

We shall prove exponential decay under the 
condition \eqref{ineq no BEC}, where the constant $5.09$ is really
\[
A = \frac{2^{7/2} \pi^{1/2}}{3 \, \zeta(3/2)^{7/6}}\sqrt{2^{3/2} + \zeta(3/2)} \approx 5.09.
\]
It is more convenient to consider the change in the critical
density rather than in the temperature. Inequality \eqref{ineq no
 BEC} is equivalent to
\be
\label{the equivalent condition}
\frac{\rho - \rho_{\rm c}^{(0)}}{\rho_{\rm c}^{(0)}} \leq -A' \sqrt{a
  \beta^{-1/2}} \, \left( 1 + O\big(\sqrt{a\beta^{-1/2}}\big)\right),
\ee
where $\rho_{\rm c}^{(0)} = \rho^{(0)}(1)$ is the critical density of the ideal Bose gas at temperature $T$, and 
where the constants $A$ and $A'$ are related by 
\[
 A' = \frac 32 \frac{\zeta(3/2)^{1/6}}{(4\pi)^{1/4}} A \approx 4.75 \,.
 \]
 
We show below that the lower bound 
\be\label{rho3}
\rho(z) \geq \rho^{(0)}(z)
- \frac a{(2\pi\beta)^2} \Bigl[ \bigl( 2^{3/2} + \zeta(\tfrac32) \bigr) \sqrt{\frac\pi{-\ln z}} + C \Bigr]
\ee
holds for some positive constant $C$ and $a\beta^{-1/2}$ small enough (see Theorem \ref{thm density bounds} and the following remarks).
We use $\dd g_{3/2}/\dd z = z^{-1} g_{1/2}(z)$, as well as 
the bound 
\be\label{sumin}
g_{1/2}(z) \leq \int_0^\infty \frac{z^t}{\sqrt{t}} \dd t = \sqrt{\frac\pi{-\ln z}}
\ee
to obtain
\[
\begin{split}
  \rho^{(0)}(1) - \rho^{(0)}(z) &\leq (4\pi\beta)^{-\frac32} \int_z^1 \sqrt{\frac\pi{-\ln s}} \, \frac{\dd s}s \\
  &= (4\pi)^{-1} \beta^{-\frac32} \sqrt{-\ln z}.
\end{split}
\]
The criterion \eqref{the criterion} is thus fulfilled when
\bm\nn
\rho \leq \rho^{(0)}(1) - (4\pi)^{-1}\beta^{-\frac32} \sqrt{-\ln z} \\
-  \frac a{(2\pi\beta)^2}\Bigl[ \bigl( 2^{3/2} + \zeta(3/2) \bigr) \sqrt{\frac\pi{-\ln z}} + C \Bigr] 
\end{multline}
for some $z<1$. The right side of this expression depends on $z$ only through $w=\sqrt{-\ln z}$. Since the minimum of $Aw + \frac Bw$ over $w>0$ is $2\sqrt{AB}$, we get the condition \eqref{the equivalent condition}. Notice that the optimal choice of $z$ is $z_0 = 1 - \pi^{-1/2} (2^{3/2} + \zeta(3/2)) a\beta^{-1/2}$ to leading order in $a\beta^{-1/2}$.

The three-dimensional situation is illustrated in Fig.\ \ref{fig3D}. The critical fugacity $z_{\rm c}$ is larger than 1 but our density bound holds for $z < 1$. Our lower bound for the critical density, $\rho_0$, is close to the conjectured expression for small $a\beta^{-1/2}$.

\bfig
\centerline{\includegraphics[width=70mm]{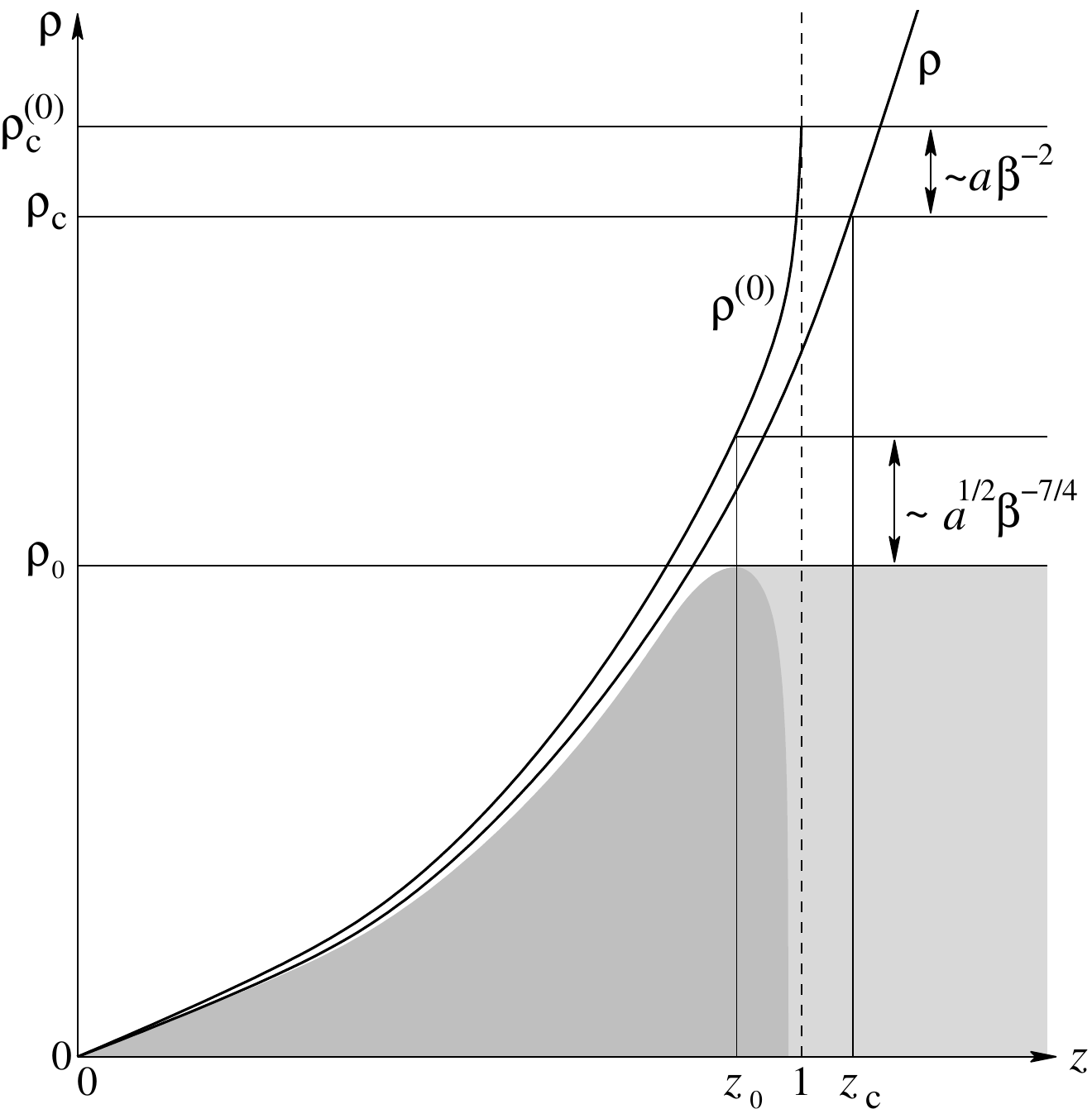}}
\caption{Qualitative graphs of the grand-canonical density for $d=3$. The shaded area represents our lower bound for the interacting density --- the darker area is the function defined in Eq.\ \eqref{rho3} and it extends to the lighter area by monotonicity of the grand-canonical density. Our lower bound $\rho_0$ for the critical density is obtained by choosing $z_0 = 1 - C a\beta^{-1/2}$. The difference between $\rho^{(0)}_{\rm c}$ and $\rho_{\rm c}$ is expected to be of the order $a\beta^{-2}$. }
\label{fig3D}
\efig

\section{Rigorous density bounds}
\label{sec density bounds}

We are left with proving the lower bounds (\ref{rho2}) and
(\ref{rho3}), respectively. These will be an immediate consequence of
Theorem~\ref{thm density bounds} below. In order to state our results
precisely, we shall first give a definition of the model and specify
the assumptions on the interaction potential. This makes it necessary
to adopt a precise mathematical tone from now on. We do so in order to
make the results accessible also to readers with a more mathematical
background.

Let $\Lambda \subset \bbR^d$ be an open and bounded domain.  The state
space for $N$ bosons in $\Lambda$ is the Hilbert space $L^2_{\rm
  sym}(\Lambda^N)$ of square-integrable complex-valued functions that
are symmetric with respect to their arguments. The Hamiltonian is
\be\nn H_{\Lambda,N} = -\sum_{i=1}^N \Delta_i + \sum_{1\leq i<j \leq
  N} U(x_i-x_j), \ee with $\Delta_i$ the Laplacian for the $i$-th
variable, with Dirichlet boundary conditions on the boundary of
$\Lambda$. The repulsive interaction is given by the multiplication
operator $U(x) \geq 0$. We assume that $U$ is radial and has finite
range, i.e., $U(x)=0$ for $|x|>R_0$. No regularity is assumed,
however; we only require that the Hamiltonian defines a self-adjoint
operator on an appropriate domain, and that the Feynman-Kac formula
for the heat kernel applies. In particular, $U$ is allowed to have a
hard core. The scattering length of $U$ is denoted by $a$.

The grand-canonical partition function is
\[
Z\equiv Z(\beta,\Lambda,z) = \sum_{N\geq0} z^N \Tr \e{-\beta H_{\Lambda,N}}.
\]
The thermodynamic pressure is defined by
\[
p(\beta,z) = \frac1{\beta |\Lambda|} \ln Z(\beta,\Lambda,z),
\]
and the density is given by
\be\label{def rho}
\rho(z) = \beta z \frac\partial{\partial z} p(\beta,z).
\ee

We always work in finite volume $\Lambda$. The existence of the
thermodynamic limit for the pressure, density and reduced density
matrix is far from trivial. In particular, the limit for the latter
has only been proved when $z$ is small enough.\cite{BR} This is of no
relevance to the present article, however, since our bounds apply to
all finite domains uniformly in the volume.  The one-particle reduced
density matrix can be written in terms of the integral kernels of the
operators $\e{-\beta H_{\Lambda,N}}$ as \bm\nn
\gamma(x,y) = \frac1{Z} \sum_{N\geq1} N z^N
\int_{\Lambda^{N-1}} \dd x_2 \cdots \dd x_N \\ \times \e{-\beta
  H_{\Lambda,N}}(x,x_2,\dots,x_N;y,x_2,\dots,x_N) \,.
\end{multline}

Relatively few rigorous results on interacting homogeneous Bose gases
are available to this date.  The only proof of occurrence of
Bose-Einstein condensation deals with the hard-core lattice model at
half-filling. \cite{DLS,KLS} Roepstorff \cite{Roe} used Bogoliubov's
inequality to get an upper bound on the condensate density. Several
aspects of Bogoliubov's theory \cite{ZB,LSSY} have been rigorously
justified. \cite{Gin1,LSY,Suto2} A rigorous proof of the leading order
of the ground state energy per particle in the low density limit was
given by Lieb and Yngvason. \cite{LY,LY2d} The next order correction
term was recently studied in a certain scaling limit. \cite{SG} Bounds
of the free energy at positive temperature were given in \cite{Sei}.
Cluster expansions give informations on the phase without
Bose-Einstein condensation, for repulsive or stable potentials.
\cite{Gin,PU} Recently there has been interest in Feynman cycles which
should be related to Bose-Einstein condensation. \cite{Suto1} The
conditions \eqref{ineq no BEC 2d} and \eqref{ineq no BEC} guarantee
the absence of infinite cycles. This follows from the considerations
here, and from the proof that all cycles are finite when the chemical
potential is negative. \cite{Uel}

The following theorem gives bounds on the density $\rho(z)$. Recall the function $g_r$ defined in \eqref{def g}.
Let us define the following small parameter $\widetilde a(\beta)$, which is associated with the scattering length $a$: for $d=2$,
\[
\widetilde a(\beta) = \left( |\ln(a^2/\beta)| - 2 \ln |\ln (a/\sqrt\beta) |\right)^{-1}+ |\ln(a^2/\beta)|^{-2};
\]
and for $d=3$,
\[
\widetilde a(\beta) = a \Bigl( \big[1 - (a/\sqrt\beta)^{1/2} \big]^{-1} + \tfrac 13 (a/\sqrt\beta)^{1/2} \Bigr).
\]

\begin{theorem}
\label{thm density bounds}
Let us assume that $\sqrt\beta\,|\ln(a/\sqrt\beta)|^{-1}>R_0$ when $d=2$, or that $a\sqrt\beta > R_0^2$ when $d=3$. Then we have, for $0<z<1$, 
\be
\label{2in}
\rho(z) \geq \rho^{(0)}(z) - \frac{4 z^2}{(4\pi\beta)^{d-1}} \left( h_d(z)  \widetilde a(\beta)+ 2^{d/2} \widetilde a(\beta/2)\right),
\ee
where
\be\label{defh}
h_d(z) =\left( 2^{\frac d2} + g_{\frac d2}(z)\right) g_{\frac d2-1}(z)  + 2^{\frac d2+1} g_{\frac d2}(z) + g_{\frac d2}(z)^2\,.
\ee
\end{theorem}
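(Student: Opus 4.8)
The plan is to work in the Feynman--Kac path integral representation of Section~\ref{sec FK}, in which the interacting grand-canonical gas becomes a gas of Brownian loops (the Feynman cycles) carrying a Gibbsian weight built from the pair potential, and to estimate the loss of density caused by the repulsion. I would write $\rho(z)=\sum_{k\ge1}\rho_k(z)$, where $\rho_k(z)$ is the intensity of loops of length $k$. In the loop picture $\rho_k(z)$ is a weighted integral over a Brownian bridge $\omega\colon[0,k\beta]\to\Lambda$ with $\omega(0)=\omega(k\beta)$: it carries the fugacity factor $z^k$, the free bridge weight (which integrates to $(4\pi\beta k)^{-d/2}$), the intra-loop interaction $\e{-V_{\rm self}(\omega)}$ with $V_{\rm self}(\omega)=\sum_{0\le i<j\le k-1}\int_0^\beta U(\omega(i\beta+t)-\omega(j\beta+t))\,\dd t$, and an ``insertion factor'' $\langle\e{-\sum_\gamma V(\omega,\gamma)}\rangle\in(0,1]$, the average over the interacting loop gas of the interaction of $\omega$ with every other loop $\gamma$. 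For $U\equiv0$ this gives $\rho_k^{(0)}(z)=z^k(4\pi\beta k)^{-d/2}$, so that $\sum_k\rho_k^{(0)}(z)=\rho^{(0)}(z)$.

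To bound $\rho_k(z)$ from below I would use repeatedly the elementary inequality $1-\e{-\sum_\alpha a_\alpha}\le\sum_\alpha(1-\e{-a_\alpha})$, valid for $a_\alpha\ge0$. Applied to $V_{\rm self}$ it bounds the self-interacting bridge weight below by the free one minus a sum over pairs $i<j$ of ``self-intersection'' weights $\int\dd W^{k\beta}_{x,x}(\omega)\,(1-\e{-\int_0^\beta U(\omega(i\beta+t)-\omega(j\beta+t))\,\dd t})$; applied to the insertion factor it gives $\langle\e{-\sum_\gamma V(\omega,\gamma)}\rangle\ge 1-\langle\sum_\gamma(1-\e{-V(\omega,\gamma)})\rangle$. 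The interacting-gas expectation on the right has to be replaced by the free one, for which the ideal (Poisson) loop gas makes everything explicit; this replacement is legitimate because the interacting Gibbs measure is the Poisson loop measure reweighted by a density that is a \emph{decreasing} function of the loop configuration (adding a loop adds a nonnegative amount to the exponent, as $U\ge0$), while $\sum_\gamma(1-\e{-V(\omega,\gamma)})$ is an \emph{increasing} function, and the Poisson point process has positive associations; hence the interacting expectation is dominated by the free one. I expect this stochastic-domination step --- the only place where $U\ge0$ is genuinely needed, and the substitute for a naive treatment of the ``rest of the gas'' --- to be one of the two main difficulties.

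Granting it, $\langle\sum_\gamma(1-\e{-V(\omega,\gamma)})\rangle_{\rm free}$ is a sum over the length $l$ of the partner loop of bridge integrals $\int\dd W^{l\beta}_{y,y}(\gamma)\,(1-\e{-V(\omega,\gamma)})$; using the subadditivity once more reduces this to one pair of time-slices at a time, i.e.\ to quantities $\int\dd W^{k\beta}_{x,x}(\omega)\int\dd W^{l\beta}_{y,y}(\gamma)\,(1-\e{-\int_0^\beta U(\omega(i\beta+t)-\gamma(j\beta+t))\,\dd t})$, and likewise for the intra-loop terms. Each of these is a genuinely two-body object: after marginalizing to the two relevant path segments it is the difference $\Tr(\e{-\beta h_2^{(0)}}-\e{-\beta h_2})$ of the two-body heat kernels, $h_2=-\Delta_1-\Delta_2+U(x_1-x_2)$, restricted to the appropriate direct or exchange channel. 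Passing to centre-of-mass and relative coordinates, the centre of mass propagates freely but with twice the mass --- effective time $\beta/2$ --- which is the origin of the constants $2^{d/2}$ and of the argument $\beta/2$ of $\widetilde a$. Summing $z^k$ over the length $k$ of the loop through the reference particle (with the extra factor $k$, the length of the time-window over which the interaction acts, which turns $\sum_k kz^kk^{-d/2}$ into $g_{d/2-1}(z)$) and over the length of the partner loop (giving $g_{d/2}(z)$), and keeping track of whether the two particles lie on distinct loops or on the same one, assembles the geometric weights into $h_d(z)$ of \eqref{defh} with the prefactor $4\,(4\pi\beta)^{-(d-1)}$; the purely two-body contribution, with no third particle present, is split off as the $2^{d/2}\widetilde a(\beta/2)$ term.

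It remains to bound the two-body heat-kernel differences, which is exactly the content of Section~\ref{sec scattering}: under the hypotheses $\sqrt\beta\,|\ln(a/\sqrt\beta)|^{-1}>R_0$ for $d=2$ and $a\sqrt\beta>R_0^2$ for $d=3$, the relevant integrals of $\e{-\beta h_2^{(0)}}-\e{-\beta h_2}$ are at most a constant times $\widetilde a(\beta)$, respectively $\widetilde a(\beta/2)$. The second difficulty lies here: one must \emph{not} expand $\e{-U}$, since for a hard core that is meaningless and even for a soft $U$ the first Born term $\int U$ is far too large; instead one keeps the full two-body propagator and estimates the heat-kernel difference directly, and the smallness conditions on $R_0$ are precisely what ensure that the support of $U$ is well separated from the diffusive scale $\sqrt\beta$, so that this difference is governed by the scattering length up to the explicit lower-order corrections built into $\widetilde a$. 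Everything else is Gaussian integration and summation of the series \eqref{def g}.
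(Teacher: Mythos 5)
Your loop-gas half follows the paper's route in all essentials: the Feynman--Kac representation, the inequality $1-\e{-\sum_\alpha a_\alpha}\le\sum_\alpha(1-\e{-a_\alpha})$, the reduction to one- and two-loop integrals of the two-body heat-kernel difference via centre-of-mass and relative coordinates (this is Lemma~\ref{lem diff B bridges}), and the winding-number bookkeeping that assembles $h_d(z)$ (this is Proposition~\ref{prop technical bounds}). The one genuinely different ingredient is your justification for replacing the interacting expectation of $\sum_\gamma\bigl(1-\e{-V(\omega,\gamma)}\bigr)$ by the free one through positive association of the Poisson loop process together with the monotonicity of the Gibbs density. That argument can be made to work (the observable is increasing, the density $\e{-\sum_{i<j}V_{ij}}$ is decreasing since $U\ge0$, and Mecke's formula evaluates the Poisson expectation), but it is heavier machinery than needed: the paper obtains the same one- and two-loop bound by simply discarding the nonnegative interactions of the partner loop with the rest of the gas, so that the remaining integrations reproduce $Z$ exactly, with no correlation inequality on the loop space.

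The genuine gap is in the scattering estimates, which are where the specific form of $\widetilde a(\beta)$, the term $2^{d/2}\widetilde a(\beta/2)$, and the hypotheses $\sqrt\beta\,|\ln(a/\sqrt\beta)|^{-1}>R_0$ ($d=2$), $a\sqrt\beta>R_0^2$ ($d=3$) actually come from. You assert that the integrals $\int K(x,y)\,\dd x\,\dd y$, $\int K(x,x)\,\dd x$ and $\int K(x,-x)\,\dd x$ are bounded by constants times $\widetilde a(\beta)$, resp.\ $\widetilde a(\beta/2)$, by appeal to Section~\ref{sec scattering} --- but that section is part of the proof you are supposed to reconstruct, and your only substantive remark (keep the full propagator, the conditions on $R_0$ separate the range of $U$ from the diffusive scale) supplies no mechanism by which the scattering length emerges with these particular corrections. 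The paper's mechanism is a variational principle obtained from Duhamel's formula: $a(\beta)=(8\pi\beta)^{-1}\int K\,\dd x\,\dd y$ equals $\frac1{8\pi}\inf_\psi\bigl\{\int\bigl(2|\nabla\psi|^2+U|1-\psi|^2\bigr)\dd x+\frac1\beta\langle\psi|f(\beta(-2\Delta+U))|\psi\rangle\bigr\}$ (Lemma~\ref{varpr}), evaluated on the zero-energy scattering solution cut off at radius $R=\sqrt\beta/\ln(\sqrt\beta/a)$, resp.\ $R=(a\sqrt\beta)^{1/2}$; this is precisely where the hypotheses $R>R_0$ enter and where the error terms inside $\widetilde a$ originate (Lemma~\ref{lem scattering 1}). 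Moreover the pinned integrals $\int K(x,\pm x)\,\dd x$ are not of the form $\int K\,\dd x\,\dd y$ and need a separate reduction: the paper writes $\e{2\beta\Delta}-\e{\beta(2\Delta-U)}$ as a difference of products of half-time kernels and uses $\e{\beta\Delta}(x,y)\le(4\pi\beta)^{-d/2}$ to bound them by $2^{d/2}a(\beta/2)$ (Lemma~\ref{lem scattering 2}); your attribution of the $\beta/2$ to the centre-of-mass motion is not the operative argument, and without some such step the direct and exchange diagonal terms remain uncontrolled. Until these quantitative estimates are supplied, the bound \eqref{2in}, with its stated constants and hypotheses, does not follow from your outline.
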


Notice that $\rho(z) \leq \rho^{(0)}(z)$; this is an immediate
consequence of \eqref{imp}.  For $d=2$ we believe that for $z$ close
to 1 the lower bound is optimal up to terms of higher order in $\tilde
a(\beta)$, while for $d=3$ the prefactor is not optimal. This is based
on the (yet unproved) assumption that the leading order correction to
the pressure is equal to $-8\pi \widetilde a(\beta) \rho^{(0)}(z)^2$ for
$z<1$.\cite{LY2d,Sei,HuangBook}. Using \eqref{rhozero} and \eqref{def rho}, this suggests that $\rho(z)\approx \rho^{(0)}(z) - 4 \widetilde
a(\beta) (4\pi\beta)^{1-d} g_{d/2}(z) g_{d/2-1}(z)$. If this indeed holds as a
lower bound, one can replace the constant $5.09$ in \eqref{ineq no BEC} by
$3.52$, yielding a bound in agreement with Huang's prediction.\cite{Hua2}

From Theorem~\ref{thm density bounds}, we can easily deduce the bounds \eqref{rho2} and \eqref{rho3}, which we have used in the previous section.
Since $g_0(z) = z/(1-z)$ and $g_1(z)= -\ln(1-z)$, we  see that the function $h_2$ is bounded by
\[
h_2(z) \leq C (1-z)^{-1} |\ln(1-z)|
\]
for some constant $C<\infty$, which implies \eqref{rho2}. To obtain \eqref{rho3}, note that the function $g_{3/2}(z)$ converges to $\zeta(3/2)$ as $z \to 1$. Using the bound 
(\ref{sumin})
we see that  $h_3$ is less than
\[
h_3(z) \leq \bigl( 2^{3/2} + \zeta(\tfrac32) \bigr) \sqrt{\frac\pi{-\ln z}} + 2^{5/2} \zeta(\tfrac32) + \zeta(\tfrac32)^2\,.
\]

We are left with the proof of Theorem~\ref{thm density bounds}.  In
Section \ref{sec FK} we use the Feynman-Kac representation of the Bose
gas to obtain bounds on the density. These bounds are expressed in
terms of integrals of the difference between the heat kernel of the
Laplacian with and without potential. Section \ref{sec scattering}
deals with bounds of these integrals.  It contains a novel variational
principle for integrals over heat kernel differences
(Lemma~\ref{varpr}), which allows to bound these in terms of the
scattering length of the interaction potential. Theorem \ref{thm
  density bounds} then follows directly from Proposition \ref{prop
  technical bounds} and from Lemmas \ref{lem scattering 1} and
\ref{lem scattering 2}.

\section{Feynman-Kac representation of the interacting Bose gas}
\label{sec FK}

From now on we shall work in arbitrary dimension $d\geq 1$. 
Let $W_{x,y}^t$ denote the Wiener measure for the Brownian bridge from $x$ to $y$ in time $t$; the normalization is chosen so that
\be\nn
\int\dd W_{x,y}^t(\omega) = (2\pi t)^{-d/2} \e{-|x-y|^2/2t} \equiv \pi_t(x-y).
\ee
The integral kernel of $\e{2\beta\Delta} - \e{\beta (2\Delta -  U)}$ will be denoted by $K(x,y)$. 
By the Feynman-Kac formula, it can be expressed as
\be
\label{FK kernel}
K(x,y) = \int \bigl( 1 - \e{-\frac14 \int_0^{4\beta} U(\omega(s)) \dd s} \bigr) \, \dd W^{4\beta}_{x,y}(\omega).
\ee
Let us introduce the interaction $\overline U(\omega,\omega')$ between two paths $\omega$ and $\omega'$: $[0,2\beta] \to \bbR^d$. Namely,
\[
\overline U(\omega,\omega') = \tfrac12 \int_0^{2\beta} U \bigl( \omega(s) - \omega'(s) \bigr) \dd s.
\]
The following identity, which will prove useful in the sequel, is obtained by changing to center-of-mass and relative coordinates.

\begin{lemma}
\label{lem diff B bridges}
For any $x,y,x',y'\in\bbR^d$,
\bm
\int\dd W_{x,y}^{2\beta}(\omega) \int\dd W_{x',y'}^{2\beta}(\omega') \bigl( 1 - \e{-\overline U(\omega, \omega')} \bigr) \\
= 2^d \pi_{4\beta}(x-y+x'-y') K(x-x', y-y'). \nn
\end{multline}
\end{lemma}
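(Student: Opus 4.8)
The plan is to reduce the double Brownian-bridge integral to a single one by the standard change of variables to center-of-mass and relative coordinates, and then recognize the resulting relative-coordinate integral as the Feynman-Kac kernel $K$ in \eqref{FK kernel}. Concretely, I would write the two bridges as $\omega(s)$ from $x$ to $y$ and $\omega'(s)$ from $x'$ to $y'$, both on $[0,2\beta]$, and introduce the relative path $\eta(s) = \omega(s) - \omega'(s)$ and the center-of-mass path $\xi(s) = \tfrac12(\omega(s)+\omega'(s))$. The interaction $\overline U(\omega,\omega') = \tfrac12\int_0^{2\beta} U(\eta(s))\,\dd s$ depends on $\eta$ only, so once the measure factorizes the $\xi$-integral is trivial and yields a Gaussian, while the $\eta$-integral carries all the content.

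The key technical point is the behavior of the Wiener measure under this linear change of variables. The product measure $\dd W^{2\beta}_{x,y}(\omega)\,\dd W^{2\beta}_{x',y'}(\omega')$ is, up to the normalizing factors $\pi_{2\beta}(x-y)$ and $\pi_{2\beta}(x'-y')$, a Brownian-bridge probability measure; the quadratic form in the exponent (the Dirichlet energy $\tfrac12\int|\dot\omega|^2 + \tfrac12\int|\dot\omega'|^2$) transforms as $\tfrac14\int|\dot\eta|^2 + \int|\dot\xi|^2$. Thus $\eta$ behaves as a Brownian bridge from $x-x'$ to $y-y'$ but with diffusion constant $2$ (equivalently, a bridge of "time" $2\cdot 2\beta = 4\beta$ after the obvious rescaling), and $\xi$ as a bridge from $\tfrac12(x+x')$ to $\tfrac12(y+y')$ with diffusion constant $\tfrac12$ (time $\beta$). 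I would make this precise either by a finite-dimensional (time-discretized) computation of the joint Gaussian density, passing to the limit, or by invoking the covariance structure of the bridge directly; the Jacobian of $(\omega,\omega')\mapsto(\xi,\eta)$ is $1$ at each time slice. After this substitution,
\[
\begin{split}
&\int\dd W_{x,y}^{2\beta}(\omega) \int\dd W_{x',y'}^{2\beta}(\omega')\bigl(1 - \e{-\overline U(\omega,\omega')}\bigr) \\
&= \Bigl(\int \dd\widetilde W^{\beta}_{\frac{x+x'}2,\frac{y+y'}2}(\xi)\Bigr)\int\bigl(1 - \e{-\frac14\int_0^{4\beta}U(\eta(s))\,\dd s}\bigr)\,\dd\widehat W^{4\beta}_{x-x',y-y'}(\eta),
\end{split}
\]
where $\widetilde W$ is the center-of-mass (diffusion constant $\tfrac12$) bridge and $\widehat W$ the relative one after rescaling time to $4\beta$ with unit diffusion constant. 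The $\xi$-integral equals its normalization; by the normalization convention stated before the lemma this is a Gaussian in $x+x'-y-y'$ which, tracking the diffusion constant $\tfrac12$, comes out as $2^d\,\pi_{4\beta}(x-y+x'-y')$. The $\eta$-integral is exactly $K(x-x',y-y')$ by \eqref{FK kernel}. Combining these two factors gives the claimed identity.

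The main obstacle is being careful with the bookkeeping of constants and time-parametrizations through the change of variables: it is easy to misplace factors of $2$, $\tfrac12$, and the powers of $2\pi t$, and the nontrivial assertion is that the relative path lives on effective time $4\beta$ (matching the $4\beta$ appearing in $K$) while the center-of-mass path contributes precisely the prefactor $2^d\pi_{4\beta}(x-y+x'-y')$. I would pin all of this down on the level of finite-dimensional distributions first — i.e. discretize $[0,2\beta]$ into $n$ intervals, write both bridges as explicit Gaussians, perform the linear substitution on $\bbR^{dn}\times\bbR^{dn}$ with unit Jacobian, separate the two resulting Gaussians, and only then take $n\to\infty$ to recover the Wiener measures and (by dominated convergence, using $0\leq 1 - \e{-\overline U}\leq 1$) the full statement.
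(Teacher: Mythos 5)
Your proposal is correct and follows essentially the same route as the paper: a change to center-of-mass and relative coordinates, with the relative path being a Brownian bridge of doubled variance (time $4\beta$ after rescaling) whose integral reproduces $K(x-x',y-y')$ via \eqref{FK kernel}. The prefactor $2^d\pi_{4\beta}(x-y+x'-y')$ that you extract from the center-of-mass Gaussian is exactly what the paper obtains through the parallelogram identity for the normalizations $\pi_{2\beta}(x-y)\pi_{2\beta}(x'-y')/\pi_{4\beta}(x-x'-y+y')$, so the two bookkeepings agree.
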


\begin{proof}
The difference $\omega-\omega'$ of two Brownian bridges is a Brownian bridge with double variance. Precisely, we have
\bm\nn
\int\frac{\dd W_{x,y}^{2\beta}(\omega)}{\pi_{2\beta}(x-y)} \int\frac{\dd W_{x',y'}^{2\beta}(\omega')}{\pi_{2\beta}(x'-y')} \bigl( 1 - \e{-\overline U(\omega, \omega')} \bigr) \\
= \int\frac{\dd W_{x-x',y-y'}^{4\beta}(\omega)}{\pi_{4\beta}(x-x'-y+y')} \Bigl( 1 - \e{-\frac12 \int_0^{2\beta} U(\omega(2s)) \dd s} \Bigr).
\end{multline}
By the parallelogram identity,
\be\nn
\frac{\pi_{2\beta}(x-y) \pi_{2\beta}(x'-y')}{\pi_{4\beta}(x-x'-y+y')} = 2^d \pi_{4\beta}(x-y+x'-y').
\ee
The result then follows from \eqref{FK kernel}.
\end{proof}

We also use the Feynman-Kac formula for the canonical partition function. Namely,
\begin{align}\nn 
&\Tr \e{-\beta H_{\Lambda,N}} = \frac1{N!} \sum_{\pi\in\caS_N} \int_{\Lambda^N} \dd x_1 \cdots \dd x_N  \\  \nn & \qquad \times
\int\dd W_{x_1,x_{\pi(1)}}^{2\beta}(\omega_1) \cdots \int\dd W_{x_N,x_{\pi(N)}}^{2\beta}(\omega_N)  \\  & \qquad \times \Bigl( \prod_{i=1}^N \upchi_\Lambda(\omega_i) \Bigr)
\exp\biggl\{ -\sum_{1\leq i<j\leq N} \overline U(\omega_i, \omega_j) \biggr\}\,.\label{FK rep}
\end{align}
Here, $\caS_N$ is the set of permutations of $N$ elements; $\upchi_\Lambda(\omega)$ is equal to one if $\omega(s) \in \Lambda$ for all $0 \leq s \leq 2\beta$, and it is zero otherwise. Eq.\ \eqref{FK rep} makes sense for general measurable functions $U : \bbR^d \to \bbR \cup \{\infty\}$. In particular, we can consider the case of the hard-core potential of radius $a$. An introduction to the Feynman-Kac formula in the context of bosonic quantum systems can be found in Ginibre's survey. \cite{Gin}

We now rewrite the grand-canonical partition function in terms of winding loops. Let $\Omega_k$ be the set of continuous paths $[0,2\beta k] \to \bbR^d$ that are closed. Its elements are denoted by $\bfomega = (x,k,\omega)$, with $x \in \bbR^d$ the starting point and $k$ the winding number; we have $\omega(0) = \omega(2\beta k) = x$. For $0 \leq \ell \leq k-1$, we also let $\omega_\ell$ denote the $\ell$-th leg of $\bsomega$,
\[
\omega_\ell(s) = \omega(2\beta\ell+s),
\]
with $0 \leq s \leq 2\beta$.
We consider the measure $\mu$ given by
\be\nn
\dd\mu(\bfomega) = \frac{z^k}k \dd x\, \upchi_\Lambda(\omega) \dd W_{x,x}^{2\beta k}(\omega) \e{-V(\bfomega)}.
\ee
Here, $V(\bfomega)$ is a self-interaction term that is defined below in Eq.\ \eqref{def V}. Let $\Omega = \cup_{k\geq1} \Omega_k$; the measure $\mu$ above naturally extends to a measure on $\Omega$. The grand-canonical partition function can then be written as\cite{Gin}
\bm\nn
Z = \sum_{n\geq0} \frac1{n!} \int_{\Omega^n} \dd\mu(\bfomega_1) \cdots \dd\mu(\bfomega_n) \\ \times
\exp\Bigl\{ -\sum_{1\leq i<j\leq n} V(\bfomega_i,\bfomega_j) \Bigr\}.
\end{multline}
The self-interaction $V(\bfomega)$ and the 2-path interaction $V(\bfomega,\bfomega')$ are given by
\be
\label{def V}
\begin{split}
&V(\bfomega) = \sum_{0\leq\ell<m\leq k-1} \overline U(\omega_\ell, \omega_m), \\
&V(\bfomega,\bfomega') = \sum_{\ell=0}^{k-1} \sum_{\ell'=0}^{k'-1} \overline U(\omega_\ell, \omega_{\ell'}').
\end{split}
\ee
We shall denote $V_{ij}\equiv V(\bfomega_i,\bfomega_j)$ for short.
Using (\ref{def rho}) one obtains an expression for the grand-canonical density, namely
\bm
\label{density FK}
\rho(z) = \frac1{|\Lambda| Z} \sum_{n\geq1} \frac1{(n-1)!} 
\int\dd\mu(\bfomega_1) k_1 \\ \times \int\dd\mu(\bfomega_2) \cdots \int\dd\mu(\bfomega_n) \e{-\sum_{i<j} V_{ij}}.
\end{multline}

From the representation (\ref{density FK}) it is easy to see that $\rho(z)\leq \rho^{(0)}(z)$. One uses the positivity of $V_{ij}$ to bound $\sum_{1\leq i<j\leq N} V_{ij} \geq \sum_{2\leq i<j\leq N} V_{ij}$. For fixed $\bfomega_1$, the integration over $\bfomega_j$ with $j\geq 2$ then yields $Z$, and hence 
$$
\rho(z) \leq \frac{1}{|\Lambda|} \int \dd\mu(\bfomega_1) k_1 \leq \rho^{(0)}(z)\,.
$$
The last inequality follows since the self-interaction $V(\bfomega)$ is also positive.

In the following proposition we shall derive a {\it lower} bound on $\rho(z)$. We use the function $h_d$ defined in \eqref{defh}, as well as the integral kernel $K(x,y)$ in (\ref{FK kernel}). 

\begin{proposition}
\label{prop technical bounds}
For $d\geq 1$, $\beta>0$ and $0<z<1$, we have the lower bound
\begin{align}\nn 
\rho(z) \geq &\rho^{(0)}(z) - \frac{2 z^2}{(4\pi\beta)^{d}} \left( h_d(z) \int K(x,y) \dd x \dd y \right. \\
& +\left. \frac 12 (8\pi\beta)^{d/2} \int \left[ K(x,x) + K(x,-x) \right] \dd x\right)
 \nn
\end{align}
 for any bounded (and measurable) $\Lambda \subset \bbR^d$.
\end{proposition}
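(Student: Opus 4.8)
The plan is to start from the loop--gas formula~\eqref{density FK}, isolate the loop $\bfomega_1$ that carries the $z\partial_z$, and bound the cost of detaching it from the rest of the gas in terms of the heat--kernel difference $K$ of~\eqref{FK kernel}. In~\eqref{density FK} I would split $\sum_{i<j}V_{ij}=\sum_{j\geq2}V_{1j}+\sum_{2\leq i<j}V_{ij}$; since $V_{1j}$ may equal $+\infty$ (hard core) the bound $\e{-x}\geq1-x$ is useless, so I would instead use the multiplicative inequality $\prod_j(1-b_j)\geq1-\sum_jb_j$ with $b_j=1-\e{-V_{1j}}\in[0,1]$. The ``$1$'' term leaves $\e{-\sum_{2\leq i<j}V_{ij}}$ in place; integrating $\bfomega_2,\dots,\bfomega_n$ out rebuilds $Z$, cancels the denominator, and leaves $\tfrac1{|\Lambda|}\int\dd\mu(\bfomega_1)\,k_1$. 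The self--interaction $\e{-V(\bfomega_1)}=\prod_{0\leq\ell<m\leq k_1-1}\e{-\overline U(\omega_\ell,\omega_m)}$ inside $\dd\mu(\bfomega_1)$ is treated the same way; its ``$1$'' part produces, by translation invariance of the base point and $\int\dd W^{2\beta k}_{x,x}(\omega)=(4\pi\beta k)^{-d/2}$, precisely $\rho^{(0)}(z)=(4\pi\beta)^{-d/2}g_{d/2}(z)$ (the confinement $\upchi_\Lambda$ touching only this leading term). Two nonnegative remainders remain, to be bounded from above: a ``one--loop'' remainder built from $\sum_{\ell<m}(1-\e{-\overline U(\omega_\ell,\omega_m)})$ for $\bfomega_1$ alone, and a ``two--loop'' remainder built from $1-\e{-V_{12}}$, where, after replacing $\sum_{j\geq2}(1-\e{-V_{1j}})$ by $(n-1)(1-\e{-V_{12}})$ by symmetry and re--summing $\bfomega_3,\dots,\bfomega_n$ once more (using positivity of the $V_{2j}$), one is left with $\tfrac1{|\Lambda|}\int\dd\mu(\bfomega_1)k_1\int\dd\mu(\bfomega_2)(1-\e{-V_{12}})$ and may drop the surviving $\e{-V(\bfomega_1)},\e{-V(\bfomega_2)}\leq1$.

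Next I would reduce each remainder to integrals of $K$. Using $1-\e{-\sum\overline U}\leq\sum(1-\e{-\overline U})$, each becomes a sum over a single participating pair of legs of $\int\dd W^{2\beta}(\omega)\int\dd W^{2\beta}(\omega')(1-\e{-\overline U(\omega,\omega')})$ weighted by Brownian--bridge factors for the spectator legs and base points. Conditioning on the junction points makes the legs independent bridges, so Lemma~\ref{lem diff B bridges} replaces that double integral by $2^d\pi_{4\beta}(\cdot)\,K(\cdot,\cdot)$, while the spectator legs, being chains of $\pi_{2\beta}$'s, convolve into one heat kernel $\pi_{2\beta j}$ per arc and $\tfrac1{|\Lambda|}$ absorbs one overall translation. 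After changing variables to the two arguments of $K$ plus the single remaining $d$--dimensional variable, I would bound each arc kernel and $\pi_{4\beta}$ at the origin, $\pi_{2\beta j}(0)=(4\pi\beta j)^{-d/2}$ and $2^d\pi_{4\beta}(0)=2^{d/2}(4\pi\beta)^{-d/2}$ (the origin of the factors $2^{d/2}$), and integrate out; this leaves $\int K(x,y)\,\dd x\,\dd y$ for generic arc lengths, $\int K(x,x)\,\dd x$ when both loops have winding one, and $\int K(x,-x)\,\dd x$ for the self--interaction with $k_1=2$.

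It then remains to sum the geometric series in the arc lengths. An arc of length $j\geq1$ contributes $\sum_j z^j\pi_{2\beta j}(0)=(4\pi\beta)^{-d/2}g_{d/2}(z)$, and the combinatorial factor $j+1$ from the placements of the interacting pair along that arc turns one such sum into $\sum_j(j+1)z^j j^{-d/2}=g_{d/2-1}(z)+g_{d/2}(z)$; the degenerate cases, where an arc has length zero (so a $\pi_0=\delta$ collapses and pins $\pi_{4\beta}$ at its peak, contributing the extra $2^{d/2}$), account for the remaining terms. Collecting the four configurations (each arc of length $\geq1$ or $=0$) one finds that the one--loop and the two--loop remainders \emph{each} assemble into $\tfrac{z^2}{(4\pi\beta)^d}h_d(z)\int K\,\dd x\,\dd y$ with $h_d$ as in~\eqref{defh}, together with $z^2 2^{d/2}(4\pi\beta)^{-d/2}\int K(x,-x)\,\dd x$ and $z^2 2^{d/2}(4\pi\beta)^{-d/2}\int K(x,x)\,\dd x$ respectively; summing, and using $z^2 2^{d/2}(4\pi\beta)^{-d/2}=\tfrac{2z^2}{(4\pi\beta)^d}\cdot\tfrac12(8\pi\beta)^{d/2}$, yields the stated bound.

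I expect this last, combinatorial step to be the main obstacle. One must enumerate the topologically distinct arc configurations, treat each degenerate $\pi_0=\delta$ case carefully (these are precisely what generate the $K(x,x)$ and $K(x,-x)$ terms, with their $(8\pi\beta)^{d/2}$ prefactor), and, in every configuration, decide which Gaussian factor to integrate to unity and which to majorize at the origin so that the winding sums telescope into $g_{d/2}$ and $g_{d/2-1}$ with exactly the integer coefficients in~\eqref{defh} — a too--crude choice produces a strictly larger function such as $g_{d/2}(z)/(1-z)$. The two resummations that rebuild $Z$, the interchange of the sums over $n$ and the winding numbers, and the handling of $\upchi_\Lambda$ are routine; everything else is Gaussian calculus and Lemma~\ref{lem diff B bridges}.
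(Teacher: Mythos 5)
Your proposal is correct and follows essentially the same route as the paper's proof: the inequality $\prod_j(1-b_j)\geq 1-\sum_j b_j$ applied first to the inter-loop and then to the intra-loop interactions, resummation of the spectator loops to rebuild $Z$, reduction of each interacting leg pair to the kernel $K$ via Lemma~\ref{lem diff B bridges}, and a classification by arc lengths/winding numbers in which your four configurations are exactly the paper's terms $A_1,A_2,A_3$ (one loop) and $B_1,B_2,B_3$ (two loops), the degenerate zero-length arcs producing the $K(x,x)$ and $K(x,-x)$ contributions with prefactor $z^2(2\pi\beta)^{-d/2}$ and the generic ones assembling into $h_d(z)$ of \eqref{defh}. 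The combinatorial step you flag as the main obstacle is carried out in the paper precisely as you sketch it (bounding the spectator Gaussians at the origin and summing the winding series into $g_{d/2}$ and $g_{d/2-1}$), and the coefficients you claim agree with its final bound.
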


\begin{proof}
Isolating the interactions between the first path and the others, we can bound $\exp\{-\sum_{1\leq i<j\leq n} V_{ij} \}$ from below as
\begin{align}\nonumber
&\exp\Bigl\{ -\sum_{j=2}^n V_{1j} \Bigr\} \exp\Bigl\{ -\sum_{2\leq k<l\leq n} V_{kl} \Bigr\} \\
&\geq \Bigl[ 1 - \sum_{j=2}^n (1 - \e{-V_{1j}}) \Bigr] \exp\Bigl\{ -\!\!\!\!\! \sum_{2\leq k<l\leq n} V_{kl} \Bigr\}. \label{ineq inter}
\end{align}
We use this lower bound in Eq.~\eqref{density FK}. The first term (the
$1$ in square brackets in (\ref{ineq inter})) is then 
$|\Lambda|^{-1} \int\dd\mu(\bfomega_1) k_1$, since the integration
over $\bfomega_j$ with $j\geq 2$ yields exactly $Z$. For the remaining terms (the sum over $j$), we also use the fact that the potential is repulsive so as to
drop the interactions between $\bfomega_j$ and the other loops in the last term in (\ref{ineq inter}) for a lower
bound. We conclude that
\begin{align}\label{a lower bound}
\rho(z) & \geq \frac1{|\Lambda|}  \int\dd\mu(\bfomega) k \\ & \qquad - \frac1{|\Lambda|} \int\dd\mu(\bfomega_1) k_1 \int\dd\mu(\bfomega_2) (1 - \e{-V_{12}}) \,. \nn
\end{align}

In a similar fashion to \eqref{ineq inter}, we have
\be\nn
\begin{split}
&\e{-V(\bfomega)} \geq 1 - \sum_{0\leq\ell<m\leq k-1} \Bigl( 1 - \e{-\overline U(\omega_\ell, \omega_m)} \Bigr), \\
&\e{-V(\bfomega_1,\bfomega_2)} \geq 1 - \sum_{\ell_1=0}^{k_1-1} \sum_{\ell_2=0}^{k_2-1} \Bigl( 1 - \e{-\overline U(\omega_{1,\ell_1}, \omega_{2,\ell_2})} \Bigr).
\end{split}
\ee
Here, $\omega_{i,\ell}$ denotes the $\ell$-th leg of the path $\bsomega_i$.
We insert these inequalities into \eqref{a lower bound}, and obtain
\be\nn
\rho(z) \geq \rho^{(0)}(z) - A - B,
\ee
with
\be
\begin{split}\nn
&A = \frac1{|\Lambda|} \sum_{k\geq2} z^k \int_\Lambda \dd x \int\dd W_{x,x}^{2\beta k}(\omega) \\
&\hspace{2.5cm} \times \sum_{0\leq\ell<m\leq k-1} \Bigl( 1 - \e{-\overline U(\omega_\ell, \omega_m)} \Bigr)\,, \\
&B = \frac1{|\Lambda|} \int\dd\mu(\bfomega_1) k_1^2 \int\dd\mu(\bfomega_2) k_2 \Bigl( 1 - \e{-\overline U(\omega_{1,1}, \omega_{2,1})} \Bigr).
\end{split}
\ee
We also used $\chi_\Lambda\leq 1$ to drop the restriction that paths stay inside $\Lambda$. 
Notice that only the first legs of $\omega_1$ and $\omega_2$ interact in $B$; this is correct because we multiplied by $k_1 k_2$. 

We decompose the terms in $A$ as $A_1+A_2+A_3$ according to the distance between interacting legs. Namely, the term $k=2$ in $A$ is equal to
\bm\nn
A_1 = \frac{z^2}{|\Lambda|} \int_{\Lambda^2} \dd x_1 \dd x_2 \int\dd W_{x_1,x_2}^{2\beta}(\omega_1) \\ \times \int\dd W_{x_2,x_1}^{2\beta}(\omega_2) 
\Bigl( 1 - \e{-\overline U(\omega_{1,1}, \omega_{2,1}(s)} \Bigr).
\end{multline}
Using Lemma \ref{lem diff B bridges}, we get
\be \nn 
A_1 \leq \frac{z^2}{(2\pi\beta)^{d/2}} \int K(x,-x) \dd x.
\ee
The terms with $k\geq3$ and two consecutive interacting legs are
\bm\nn
A_2 = \frac1{|\Lambda|} \int_{\Lambda^3} \dd x_1\dd x_2 \dd x_3 \int\dd W_{x_1,x_2}^{2\beta}(\omega_1) \dd W_{x_2,x_3}^{2\beta}(\omega_2) \\
\Bigl( 1 - \e{-\overline U(\omega_{1,1}, \omega_{2,1})} \Bigr) \sum_{k\geq1} \frac{(k+2) z^{k+2}}{(4\pi\beta k)^{d/2}} \e{-\frac{|x_1-x_3|^2}{4\beta k}}.
\end{multline}
Using Lemma \ref{lem diff B bridges} and bounding the exponentials by $1$, we get
\be\nn 
A_2 \leq \frac{2^{d/2} z^2}{(4\pi\beta)^{d}} \Bigl[ g_{d/2-1}(z) + 2 g_{d/2}(z) \Bigr] \int K(x,y) \dd x \dd y. \\
\ee
The terms where no consecutive legs interact are
\bm\nn
A_3 = \frac1{2|\Lambda|} \int_{\Lambda^4} \dd x_1\dd x_2 \dd x_3 \dd x_4 \\
\int\dd W_{x_1,x_2}^{2\beta}(\omega_1) \int\dd W_{x_3,x_4}^{2\beta}(\omega_2) \Bigl( 1 - \e{-\overline U(\omega_{1,1}, \omega_{2,1})} \Bigr) \\
\sum_{k_1, k_2 \geq 1} \frac{(k_1 + k_2 + 2) z^{k_1 + k_2 + 2}}{(4\pi\beta k_1)^{d/2} (4\pi\beta k_2)^{d/2}} \e{-\frac{|x_2-x_3|^2}{4\beta k_1} - \frac{|x_1-x_4|^2}{4\beta k_2}}.
\end{multline}
Then
\be \nn 
\begin{split}
A_3 &\leq \frac{2^{d/2-1}}{(4\pi\beta)^{3d/2}} \int_{\bbR^{3d}} \dd x \dd y \dd z \e{-\frac{|x-y-2z|^2}{8\beta}} K(x,y) \\ 
&\quad\quad\quad \times \sum_{k_1, k_2 \geq 1} \frac{(k_1 + k_2 + 2) z^{k_1 + k_2 + 2}}{(k_1 k_2)^{d/2}} \\
&= \frac{z^2}{(4\pi\beta)^d} g_{\frac d2}(z) \bigl[ g_{\frac d2-1}(z) + g_{\frac d2}(z) \bigr] \int K(x,y) \dd x \dd y.
\end{split}
\ee

We now decompose the terms in $B$ as $B_1+B_2+B_3$ according to the winding numbers of $\omega_1$ and $\omega_2$. The term $B_1$ involves two paths of winding numbers 1, and with the aid of Lemma \ref{lem diff B bridges} we find
\be\nn 
B_1 \leq \frac{z^2}{(2\pi\beta)^{d/2}} \int_{\bbR^d} K(x,x) \dd x.
\ee
Next, $B_2$ involves a path of winding number 1 and another path of higher winding number. Dropping the self-interaction terms yields the upper bound
\be\nn 
\begin{split}
B_2 &\leq \frac{2^{d/2}}{(4\pi\beta)^d} \int_{\bbR^{2d}} \dd x \dd y \,\e{-\frac{|x-y|^2}{8\beta}} K(x,y) \\ & \qquad \times\sum_{k\geq1} \frac{(k+2) z^{k+2}}{k^{d/2}} \\
&\leq \frac{2^{d/2} z^2}{(4\pi\beta)^d} [g_{d/2-1}(z) + 2 g_{d/2}(z)] \int K(x,y) \dd x \dd y.
\end{split}
\ee
Finally, $B_3$ involves paths with winding numbers higher than 2. We have 
\be\nn 
\begin{split}
B_3 &\leq \frac{2^{d/2}}{(4\pi\beta)^{\frac{3d}2}} \int_{\bbR^{3d}} \dd x \dd y \dd z \e{-\frac{|x-y-2z|^2}{8\beta}} K(x,y) \\
&\quad\quad\quad  \times \sum_{k_1, k_2 \geq 1} \frac{(k_1+1) z^{k_1+k_2+2}}{(k_1 k_2)^{d/2}} \\
&= \frac{z^2}{(4\pi\beta)^d} g_{\frac d2}(z) \bigl[ g_{\frac d2-1}(z) + g_{\frac d2}(z) \bigr] \int K(x,y) \dd x \dd y.
\end{split}
\ee
Collecting the bounds on $A_1,A_2,A_3,B_1,B_2,B_3$, we get the lower bound of Proposition \ref{prop technical bounds}.
\end{proof}

\section{Scattering estimates}
\label{sec scattering}

As before, let $U(x) \geq 0$ be radial and supported on the set $\{
x\, : \, |x|\leq R_0\}$. Let $a$ be the scattering length of $U$. We
consider the Hilbert space $L^2(\bbR^d)$ and the integral kernel
$K(x,y)$ of the operator $\e{2\beta \Delta} - \e{\beta(2\Delta - U)}$. It follows from the Feynman-Kac representation that
$K(x,y) \geq 0$, see Eq.\ \eqref{FK kernel} in the previous section.

We introduce 
\be\label{def:ab} 
a(\beta) = \frac 1 {8\pi \beta} \int
K(x,y) \dd x \dd y\,.  
\ee 
We shall see below that, for $d=3$,
$a(\beta)$ is a good approximation to the scattering length. In fact,
$a\leq a(\beta)\leq a_0$, with $a_0$ the first order Born
approximation to $a$. In two dimensions $a(\beta)$ is dimensionless
and its relation to the scattering length is $a(\beta)\approx
|\ln(a^2/\beta)|^{-1}$ for large $\beta$.
For $t>0$, we also introduce the function
\[
f(t) = t \frac{ 1- \e{-t}}{t- 1+\e{-t}}.
\]

\begin{lemma}\label{varpr}
We have
\be\label{vpab}
a(\beta) = \frac 1{8\pi} \inf_{\psi \in H^1(\bbR^d)} \mathcal{E}_\beta(\psi),
\ee
where
\begin{align}\nn
\mathcal{E}_\beta(\psi) & = \int_{\bbR^d} \left(2 |\nabla \psi(x)|^2 + U(x) |1-\psi(x)|^2 \right) \dd x \\ \nn & \quad 
+ \frac 1\beta \left\langle \psi\left| f(\beta(-2\Delta+U)) \right|\psi\right\rangle.
\end{align}
\end{lemma}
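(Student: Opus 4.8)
The plan is to compute $a(\beta)$ directly from its definition \eqref{def:ab} by working in the spectral representation of the self-adjoint operator $-2\Delta+U$, then recognize the result as the infimum of the quadratic form $\caE_\beta$. Write $h = -2\Delta + U \geq 0$ acting on $L^2(\bbR^d)$. The kernel $K(x,y)$ is that of $\e{-2\beta(-\Delta)} - \e{-\beta h}$; in operator form, $\int K(x,y)\,\dd x\,\dd y = \langle \mathbbm{1}, (\e{-2\beta(-\Delta)} - \e{-\beta h})\,\mathbbm{1}\rangle$ interpreted as a limit of $\langle \upchi_\Lambda, (\,\cdot\,)\,\upchi_\Lambda\rangle$ over expanding boxes, which is the natural way to make sense of the pairing against the constant function. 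Since $\e{-2\beta(-\Delta)}$ leaves constants invariant, $\e{-2\beta(-\Delta)}\mathbbm{1} = \mathbbm{1}$, and one may rewrite
\[
\int K(x,y)\,\dd x\,\dd y = \bigl\langle \mathbbm{1}\,\big|\,(1 - \e{-\beta h})\,\big|\,\mathbbm{1}\bigr\rangle
\]
in the same regularized sense. The idea is then to complete the square: for a trial function of the form $\mathbbm{1} - \psi$ with $\psi \in H^1$, there should be an identity expressing $8\pi a(\beta)$ as the value of $\caE_\beta$ at the minimizer, and $\caE_\beta(\psi) \geq \caE_\beta(\psi_{\min})$ for all other $\psi$.

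The key algebraic step is the following. The Euler–Lagrange equation for $\caE_\beta$ is, formally,
\[
-2\Delta\psi - U(1-\psi) + \tfrac1\beta f(\beta h)\,\psi = 0,
\]
i.e.\ $\bigl(h + \tfrac1\beta f(\beta h)\bigr)\psi = U = h\,\mathbbm{1}$ (using $h\,\mathbbm{1} = U$ since $\Delta\mathbbm{1}=0$). Now observe the elementary function identity: for $t = \beta h$,
\[
\frac{t}{\,t + f(t)\,} = \frac{t}{\,t + t\frac{1-\e{-t}}{t-1+\e{-t}}\,} = \frac{t-1+\e{-t}}{\,t\,} = 1 - \frac{1-\e{-t}}{t},
\]
which one checks by clearing denominators. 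Hence the minimizer is $\psi_{\min} = \bigl(1 - \tfrac{1-\e{-\beta h}}{\beta h}\bigr)\mathbbm{1}$ and $1-\psi_{\min} = \tfrac{1-\e{-\beta h}}{\beta h}\,\mathbbm{1}$. Plugging back, the minimal value of the form is $\langle \psi_{\min}\,|\,h\,|\,\mathbbm{1}\rangle = \langle \mathbbm{1}\,|\,h\,\psi_{\min}\,|\,\mathbbm{1}\rangle$ (using that $h$ applied to $\psi_{\min}$ and to $\mathbbm{1}$ can be moved around by self-adjointness, with $h\,\mathbbm{1}=U$ compactly supported so all pairings are finite), and a short computation using the identity above collapses this to $\langle \mathbbm{1}\,|\,(1-\e{-\beta h})\,|\,\mathbbm{1}\rangle = \int K\,\dd x\,\dd y = 8\pi\beta a(\beta)$. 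One must be slightly careful that $\caE_\beta(\psi_{\min})$ equals $\langle\psi_{\min}|h|\mathbbm 1\rangle$ and not twice that: since $\caE_\beta(\psi) = \langle \psi | (h + \beta^{-1}f(\beta h)) |\psi\rangle - 2\langle \psi | U\rangle + \langle \mathbbm 1 | U \rangle$ and at the critical point $(h+\beta^{-1}f(\beta h))\psi_{\min} = U$, one gets $\caE_\beta(\psi_{\min}) = -\langle\psi_{\min}|U\rangle + \langle\mathbbm 1|U\rangle = \langle \mathbbm 1 - \psi_{\min}\,|\,U\rangle = \langle (1-\psi_{\min})\,|\,h\,|\,\mathbbm 1\rangle$, and then the function identity finishes it.

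The convexity/minimization part is then immediate: $\caE_\beta$ is a quadratic form whose Hessian is $2(-2\Delta) + 2U + 2\beta^{-1}f(\beta h)$, which is a nonnegative operator since $-\Delta \geq 0$, $U \geq 0$, and $f(t) \geq 0$ for $t > 0$ (indeed $f(t) > 0$: both $1-\e{-t}$ and $t-1+\e{-t}$ are positive for $t>0$). Therefore any critical point is a global minimum, and the infimum in \eqref{vpab} is attained at $\psi_{\min}$ with value $8\pi a(\beta)$ as computed.

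I expect the main obstacle to be analytic rather than algebraic: justifying the manipulations with the constant function $\mathbbm{1} \notin L^2(\bbR^d)$. The clean route is to note that $K(x,y)$ has good decay — $K(x,y) = 0$ unless the Brownian bridge from $x$ to $y$ can reach $\supp U$, so $K$ decays Gaussianly in $|x|+|y|$ by the Feynman–Kac formula \eqref{FK kernel} — hence $\int K\,\dd x\,\dd y < \infty$ and all the regularized pairings $\lim_{\Lambda\uparrow\bbR^d}\langle\upchi_\Lambda,\cdot\,\upchi_\Lambda\rangle$ converge. Correspondingly, the natural trial functions are $\psi$ with $1-\psi \in L^2$ (equivalently $\psi \to 1$ at infinity), for which $\caE_\beta(\psi)$ is finite: the gradient term needs $\nabla\psi \in L^2$, the potential term needs $1-\psi \in L^2$ near $\supp U$, and the $f(\beta h)$ term is controlled because $f(t)/t \to 1$ as $t \to \infty$ and $f(t) \sim t/2$ as $t\to 0$, so $\beta^{-1}f(\beta h) \leq C(1 + h)$ as a form bound, making $\langle\psi|\beta^{-1}f(\beta h)|\psi\rangle$ finite whenever $\psi \in H^1$ with $\psi - \mathbbm 1 \in L^2$ (write $\psi = \mathbbm 1 + \varphi$, $\varphi \in H^1 \cap L^2$, and note $f(\beta h)\mathbbm 1$ makes sense since $h\,\mathbbm 1 = U$ is compactly supported and $f(t)/t$ is bounded). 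Once the function space is set up correctly and the decay of $K$ is recorded, the rest is the completion-of-the-square identity above, which is purely formal.
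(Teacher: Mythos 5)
Your core computation is essentially the paper's proof: you derive the Euler--Lagrange equation $\bigl(-2\Delta+U+\tfrac1\beta f(\beta(-2\Delta+U))\bigr)\psi=U$, use the function identity $\tfrac1{t+f(t)}=\tfrac{t-1+\e{-t}}{t^2}$ (the paper writes it as $\int_0^1(1-s)\e{-st}\dd s$) to identify the minimizer, evaluate $\caE_\beta$ at the minimizer as $\int U(1-\psi_{\min})\dd x$, and invoke convexity. The only difference is how $\int K\,\dd x\,\dd y$ is computed: you pair $\e{2\beta\Delta}-\e{\beta(2\Delta-U)}$ with the constant function and regularize, whereas the paper applies the Duhamel formula twice and gets $\int K=\beta\int U(1-L_\beta U)\dd x$ with $L_\beta=\int_0^\beta(1-s/\beta)\e{s(2\Delta-U)}\dd s$, so that every object paired is integrable and no regularization is needed. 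To justify your step $\langle\bbone|(1-\e{-\beta(-2\Delta+U)})|\bbone\rangle=\beta\langle\tfrac{1-\e{-\beta h}}{\beta h}\bbone,U\rangle$ one would in practice write $(1-\e{-\beta h})\bbone=\int_0^\beta\e{-sh}U\,\dd s$, which is exactly Duhamel; so the two routes coincide once made rigorous.

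There are, however, genuine errors in the part you flag as the main obstacle. First, your asymptotics for $f$ are wrong: $f(t)\to2$ as $t\to0$ and $f(t)\to1$ as $t\to\infty$, so $1\leq f\leq2$ and $f(t)/t$ is unbounded near $0$; in particular $f(\beta h)\bbone$ is not controlled the way you claim. As a consequence, your proposed trial space $\{\psi:\,1-\psi\in L^2\}$ (i.e. $\psi=\bbone+\varphi$, $\varphi\in H^1\cap L^2$) is not usable: since $f\geq1$, the term $\beta^{-1}\langle\psi|f(\beta h)|\psi\rangle$ is finite only for $\psi\in L^2$, so $\caE_\beta\equiv+\infty$ on that affine space. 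The correct variational space is $H^1(\bbR^d)$ as in the statement, and the minimizer $\psi_\beta=L_\beta U$ is an $L^2$ function that decays at infinity — this functional is not the zero-energy scattering problem, and it is precisely the lower bound $f\geq1$ (not $f\sim t/2$) that makes the $H^1$ setting work. Your algebra survives because your own $\psi_{\min}$ is in fact this $L^2$ function, but the "clean route" you describe would break the argument. Second, you never treat unbounded $U$: the paper explicitly allows a hard core, while your manipulations ($h\bbone=U$ as an $L^2$ or $L^1$ object, $\langle\psi|U\rangle$, $\int U<\infty$) presuppose integrability of $U$. The paper first proves the identity for bounded $U$ and then passes to general $U$ by the truncation $U_s=\min\{U,s\}$ together with monotone convergence of the kernel $K$; some such limiting step must be added to your write-up.
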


Note that $f$ is monotone decreasing, with $1\leq f(t)\leq 2$ for all
$t>0$. From monotonicity it follows immediately that $a(\beta)$ is
monotone decreasing in $\beta$. Moreover, for $d=3$ it is not hard to
see that $\lim_{\beta \to \infty} a(\beta) = a$. For any $d$,
$\lim_{\beta\to 0} a(\beta) = (8\pi)^{-1} \int U(x) \dd x$.  (This
is also true when $\int U(x)\dd x =\infty$.)

\begin{proof}
We first consider the case when $U$ is bounded. With the aid of the Duhamel formula we have
\begin{align}\nn
&\e{2\beta \Delta} - \e{\beta(2\Delta -  U)}  = \int_0^\beta \e{2 (\beta-t) \Delta} U \e{t (2\Delta -  U)} \dd t \\ \nn
&=  \int_0^\beta \e{2 (\beta-t) \Delta} U \e{2t \Delta } \dd t \\ &
\quad - \int_0^\beta \int_0^t  \e{2 (\beta-t) \Delta} U \e{s (2\Delta- U)} U \e{2(t-s) \Delta} \dd s \dd t\,. \nn 
\end{align}
Hence
\be\label{eab}
a(\beta) = \frac 1{8\pi} \int U(x) \left ( 1  - \psi_\beta(x)\right) \dd x,
\ee
where $\psi_\beta(x) = (L_\beta U)(x)$, with 
\be \nn
L_\beta =  \int_0^\beta (1-s/\beta)   \e{s(2\Delta- U)} \dd s.
\ee

The functional $\mathcal{E}_\beta(\psi)$ has a quadratic and a linear part in $\psi$, and it is not hard to see that the unique minimizer satisfies
\be\label{mt}
\big[ -2\Delta + U + \tfrac 1\beta f(\beta(-2\Delta+U)) \big] \psi = U .
\ee
Since 
$$
\frac {1}{t + f(t)} = \int_0^1 (1-s) \e{-st} \dd s
$$
it follows that $\psi=\psi_\beta$, i.e.\ $\psi_\beta = L_\beta U$ is the unique minimizer of $\mathcal{E}_\beta$. 
After multiplying (\ref{mt}) by $\psi_\beta$ and integrating we see that $\mathcal{E}_\beta(\psi_\beta) = \int U(x) ( 1-\psi_\beta(x))\dd x$ which, because of (\ref{eab}),  implies  (\ref{vpab}).

Finally, the case of unbounded $U$ can be dealt with using monotone
convergence. If we replace $U$ by $U_s(x)=\min\{U(x),s\}$ then the
kernel $K(x,y)$ corresponding to $U_s$ is monotone increasing in
$s$. We can apply the argument above to $U_s$ and take the limit $s\to
\infty$ at the end. The convergence of $a(\beta)$ is guaranteed by
monotonicity.
\end{proof}

The variational principle of Lemma~\ref{varpr} is convenient for obtaining an upper bound on $a(\beta)$.

\begin{lemma}\label{lem scattering 1}
For $d=2$ and  $\sqrt\beta\, |\ln(a/\sqrt\beta)|^{-1} > R_0$, 
\be \label{aab}
a(\beta) \leq \frac 1{ |\ln(a^2/\beta)| - 2 \ln |\ln (a/\sqrt\beta)|}  +\frac 1{ |\ln(a^2/\beta)|^{2}} \,.
\ee
For $d\geq 3$ and $a \sqrt\beta > R_0^{d-1}$, 
\bm\label{aabb}
a(\beta)\leq  \frac{\pi^{d/2-1}}{2\, \Gamma(d/2)} a \Bigl( \left[ 1- \big ( a \beta^{1-d/2}\big)^{1/(d-1)}\right]^{-1} \\
+ \frac 1d \big ( a \beta^{1-d/2}\big)^{1/(d-1)}\Bigr)\,.
\end{multline}
\end{lemma}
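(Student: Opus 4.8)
The plan is to use the variational principle of Lemma~\ref{varpr}. Since $f(t) \leq 2$ for all $t > 0$, any $\psi \in H^1(\bbR^d)$ satisfies
\[
8\pi\, a(\beta) \leq \mathcal{E}_\beta(\psi) \leq \int_{\bbR^d} \bigl( 2|\nabla\psi|^2 + U|1-\psi|^2 \bigr)\,\dd x + \frac{2}{\beta}\,\|\psi\|_2^2 \,,
\]
so it is enough to produce a good trial function. I would take $\psi = 1 - g$, where $g$ is a suitably normalized zero-energy scattering solution of $U$ on a ball $B_b = \{ |x| < b \}$, and $\psi \equiv 0$ outside $B_b$. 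Concretely, let $f_0 \geq 0$ solve $(-2\Delta + U)f_0 = 0$ with $f_0(x) = 1 - (a/|x|)^{d-2}$ for $d \geq 3$ (respectively $f_0(x) = \ln(|x|/a)$ for $d=2$) when $|x| > R_0$, and put $g = f_0/f_0(b)$ on $B_b$, which is continuous with $0 \leq \psi \leq 1$. For hard-core or merely measurable $U$ one first replaces $U$ by $U_s = \min\{U,s\}$ and uses monotone convergence, exactly as in the proof of Lemma~\ref{varpr}.

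The local part of the energy is then evaluated exactly: since $g$ solves the scattering equation on $B_b$, integration by parts gives
\[
\int_{\bbR^d} \bigl( 2|\nabla\psi|^2 + U|1-\psi|^2 \bigr)\,\dd x = \int_{B_b} \bigl( 2|\nabla g|^2 + U g^2 \bigr)\,\dd x = 2\int_{\partial B_b} g\, \partial_n g \,,
\]
which equals $8\pi a/(1-a/b)$ for $d=3$ and $4\pi/\ln(b/a)$ for $d=2$, the general $d \geq 3$ case being the same radial computation in the annulus $R_0 < |x| < b$. I would then choose $b$ so that this leading term reproduces the claimed one: $b = (a\sqrt\beta)^{1/2}$ when $d=3$, giving $8\pi a\,[1-(a/\sqrt\beta)^{1/2}]^{-1}$, and $b = \sqrt\beta\,|\ln(a/\sqrt\beta)|^{-1}$ when $d=2$, so that $2\ln(b/a) = |\ln(a^2/\beta)| - 2\ln|\ln(a/\sqrt\beta)|$. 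With these choices the hypotheses of the lemma, $a\sqrt\beta > R_0^2$ for $d=3$ and $\sqrt\beta\,|\ln(a/\sqrt\beta)|^{-1} > R_0$ for $d=2$, say precisely that $b > R_0$, which is what makes the trial function well defined: the annulus $R_0 < |x| < b$ is then nonempty and $f_0 > 0$ on it. One also uses $a \leq R_0$, which holds for repulsive finite-range $U$.

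It remains to absorb the genuinely new term $\frac{2}{\beta}\|\psi\|_2^2$ into the lower-order pieces of the claimed bounds — the $\tfrac{a}{3}(a/\sqrt\beta)^{1/2}$ term for $d=3$ and the $|\ln(a^2/\beta)|^{-2}$ term for $d=2$. The crude estimate $\psi \leq 1$ is too wasteful here; instead I would split $\|\psi\|_2^2 = \int_{B_{R_0}}\psi^2 + \int_{R_0 < |x| < b}\psi^2$, use $\psi \leq 1$ only on the core $B_{R_0}$ and the explicit profile of $\psi$ on the annulus ($\psi(r) = a(r^{-1}-b^{-1})/(1-a/b)$ for $d=3$, $\psi(r) = \ln(b/r)/\ln(b/a)$ for $d=2$). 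Writing $\lambda = R_0/b \in (0,1)$ and $\varepsilon = a/b$, which is small in the dilute regime, the $d=3$ estimate then reduces to the elementary inequality $\lambda^3 + \varepsilon^2(1-\lambda)^3/(1-\varepsilon)^2 \leq 1$ on $[0,1]$, true because the left side is convex there with endpoint values $\varepsilon^2/(1-\varepsilon)^2 \leq 1$ and $1$; the $d=2$ case similarly boils down to $\int_{B_b}\psi^2 \leq \pi b^2$, which follows from $R_0 < b$ once $\ln(b/a)$ is not too small. Feeding $b > R_0$ back in then gives exactly the claimed correction terms.

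The only real obstacle is this last step: one has to calibrate the cutoff $b$ so that the scattering (leading) term, the annulus contribution, and the core contribution all simultaneously fit below the stated bound, using the hypothesis on $R_0$ sharply, and — in two dimensions — carrying the iterated-logarithm bookkeeping through correctly. Everything else, namely the integration by parts, the reduction $f \leq 2$, and the monotone approximation for singular potentials, is routine once Lemma~\ref{varpr} is in hand.
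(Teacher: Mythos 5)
Your argument is essentially the paper's proof: the same variational principle with $f\le 2$, the same trial function (the zero-energy scattering solution cut off at a radius $b>R_0$ and extended by zero), and the same choices $b=(a\sqrt\beta)^{1/(d-1)}$ for $d\ge3$, resp.\ $b=\sqrt\beta\,|\ln(a/\sqrt\beta)|^{-1}$ for $d=2$. The one point where you depart --- the claim that $\psi\le 1$ is ``too wasteful'' for the $\tfrac2\beta\|\psi\|_2^2$ term, and the core/annulus splitting you leave as the ``only real obstacle'' --- is unnecessary: the crude bound $\|\psi\|_2^2\le \sigma_d b^d$ (with $\sigma_d$ the volume of the unit ball) contributes $\sigma_d b^d/(4\pi\beta)$ to $a(\beta)$, which with your choices of $b$ reproduces \emph{exactly} the second terms in \eqref{aab} and \eqref{aabb}, e.g.\ $b^3/(3\beta)=\tfrac a3(a/\sqrt\beta)^{1/2}$ for $d=3$ and $b^2/(4\beta)=|\ln(a^2/\beta)|^{-2}$ for $d=2$, so the final calibration you worry about is immediate and no refined estimate on the annulus is needed.
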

Note that the prefactor in (\ref{aabb}) is equal to $1$ for
$d=3$. Lemma~\ref{lem scattering 1} is the only place where the
finiteness of the range $R_0$ of $U$ is being used. Appropriate upper
bounds on $a(\beta)$ can also be obtained without this assumption,
and hence our main results generalize to repulsive interaction potentials with
infinite range (but finite scattering length). For simplicity, we
shall not pursue this generalization here.

\begin{proof}
Let $R>R_0$, and let $\psi_\infty$ be the minimizer of 
\be\label{infly}
\int_{|x|\leq R} \left( 2 |\nabla\psi|^2 + U |1-\psi|^2 \right) \dd x
\ee
subject to the boundary condition $\psi(x)=0$ for $|x|=R$. It can be shown \cite{LY2d,LSSY} that there exists a unique minimizer for this problem, which satisfies $0\leq \psi_\infty\leq 1$
 and 
$$
\psi_\infty(x) = \begin{cases} 1 - \frac{\ln(|x|/a)}{\ln(R/a)} & \text{for $d=2$} \\ 1- \frac{ 1- a |x|^{2-d}} {1-a R^{2-d}} & \text{for $d\geq 3$} \end{cases}
$$
in the region $R_0\leq |x|\leq R$. Moreover, 
the minimum of (\ref{infly}) is given by 
$$
E_R = \begin{cases} \frac{4\pi}{\ln(R/a)} & \text{for $d=2$} \\ \frac{4\pi^{d/2} a}{\Gamma(d/2) (1-a R^{2-d})} & \text{for $d\geq 3$.} \end{cases}
$$

To obtain an upper bound on $a(\beta)$, we use the variational principle (\ref{vpab}) with $\psi(x) = \psi_\infty(x)$ for $|x|\leq R$, and $\psi(x)=0$ for $|x|\geq R$. Using $|\psi_\infty|\leq 1$ and $f\leq 2$, we obtain the bound
$$
a(\beta) \leq \frac{E_R}{8\pi} + \frac{\sigma_d R^d}{4\pi\beta}\,,
$$
where $\sigma_d = \pi^{d/2} /\Gamma(1+d/2)$ denotes the volume of the unit ball in $\bbR^d$. The choice $R=\sqrt{\beta}\,[\ln(\sqrt\beta/a)]^{-1}$ for $d=2$ and $R=(a\sqrt\beta)^{1/(d-1)}$ for $d\geq3$ yields (\ref{aab}) and (\ref{aabb}).
\end{proof}

For our lower bound on the density in Proposition~\ref{prop technical bounds}, we need a bound on two more integrals of the kernel $K$. Since they appear only in terms of higher order, a rough bound will do.

\begin{lemma}
\label{lem scattering 2}
Let 
\be\label{def:apb}
\begin{split}
a'(\beta) &= (8\pi \beta)^{d/2-1} \int K(x,x) \dd x,  \\
a''(\beta) &= (8\pi \beta)^{d/2-1} \int K(x,-x) \dd x \,. 
\end{split}
\ee
Then 
\be\nn
\max\{ a'(\beta), a''(\beta) \}  \leq 2^{d/2} a(\beta/2)\,.
\ee
\end{lemma}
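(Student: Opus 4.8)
The plan is to bound the integrals of $K$ at inverse temperature $\beta$ by the corresponding integral at inverse temperature $\beta/2$, which by \eqref{def:ab} equals $8\pi(\beta/2)\,a(\beta/2)=4\pi\beta\,a(\beta/2)$. Write $C=\e{\beta\Delta}$ and $B=\e{(\beta/2)(2\Delta-U)}$, so that $\e{2\beta\Delta}=C^2$, $\e{\beta(2\Delta-U)}=B^2$, and $K$ is the kernel of $C^2-B^2$. Set $D=C-B$. This operator is exactly $\e{2(\beta/2)\Delta}-\e{(\beta/2)(2\Delta-U)}$, i.e.\ "$K$ at inverse temperature $\beta/2$", so its kernel $K_{\beta/2}(x,y)$ is nonnegative by the Feynman--Kac formula \eqref{FK kernel} and satisfies $\int K_{\beta/2}(x,y)\,\dd x\,\dd y=4\pi\beta\,a(\beta/2)$. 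Note also that $C$ has kernel $\pi_{2\beta}(x-y)$, and Feynman--Kac gives $0\leq\e{(\beta/2)(2\Delta-U)}(x,y)\leq\pi_{2\beta}(x-y)$.

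The one ingredient I would use is the trivial operator identity $C^2-B^2=DC+BD$. Passing to integral kernels (legitimate since $\int K_{\beta/2}(x,z)\,\pi_{2\beta}(z-y)\,\dd z\leq(4\pi\beta)^{-d/2}<\infty$), this reads
\[
K(x,y)=\int K_{\beta/2}(x,z)\,\pi_{2\beta}(z-y)\,\dd z+\int\e{(\beta/2)(2\Delta-U)}(x,z)\,K_{\beta/2}(z,y)\,\dd z .
\]
Since $K_{\beta/2}\geq 0$ and both $\pi_{2\beta}$ and the kernel $\e{(\beta/2)(2\Delta-U)}(\cdot,\cdot)$ are bounded by $\pi_{2\beta}(0)=(4\pi\beta)^{-d/2}$, this yields the pointwise bound
\[
K(x,y)\leq(4\pi\beta)^{-d/2}\Bigl( \int K_{\beta/2}(x,z)\,\dd z+\int K_{\beta/2}(z,y)\,\dd z\Bigr).
\]

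From here the two claimed estimates drop out by specializing $y$. Putting $y=x$ and integrating in $x$ (all integrands nonnegative, so Tonelli applies) and using the symmetry $K_{\beta/2}(z,x)=K_{\beta/2}(x,z)$ (kernel of a self-adjoint operator), one gets $\int K(x,x)\,\dd x\leq 2(4\pi\beta)^{-d/2}\int K_{\beta/2}(x,z)\,\dd x\,\dd z=2(4\pi\beta)^{1-d/2}a(\beta/2)$. Putting $y=-x$, integrating, and changing variables $x\mapsto-x$ in the second term gives the identical bound for $\int K(x,-x)\,\dd x$. Recalling the definitions \eqref{def:apb} and using $(8\pi\beta)^{d/2-1}(4\pi\beta)^{1-d/2}=2^{d/2-1}$, both inequalities become $\max\{a'(\beta),a''(\beta)\}\leq 2\cdot 2^{d/2-1}a(\beta/2)=2^{d/2}a(\beta/2)$.

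I do not expect a genuine obstacle here: the argument is soft, the content being the semigroup splitting $C^2-B^2=DC+BD$ together with positivity of the heat-with-potential kernel. The only points needing a little care are the passage from the operator identity to the kernel identity (covered by the crude integrability bound above) and, when $U$ is unbounded, the finiteness of the integrals involved; the latter is dispatched exactly as in the proof of Lemma~\ref{varpr}, by first treating $U_s=\min\{U,s\}$ and then letting $s\to\infty$ by monotone convergence.
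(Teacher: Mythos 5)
Your proof is correct and follows essentially the same route as the paper: split at inverse temperature $\beta/2$ via the semigroup property, isolate one factor equal to the kernel difference $K_{\beta/2}$ (nonnegative by Feynman--Kac, with total integral $4\pi\beta\,a(\beta/2)$), and bound the companion kernel by $(4\pi\beta)^{-d/2}$. The only cosmetic difference is that you use the exact identity $C^2-B^2=DC+BD$, whereas the paper uses the elementary inequality $ab-cd\leq a(b-d)+b(a-c)$, which amounts to bounding by the kernel of $CD+DC$ instead; both yield the same final estimate.
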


For $d= 3$, it can be shown that both $a'(\beta)$ and $a''(\beta)$ converge to $a$ as $\beta\to \infty$, but we do not need this here.

\begin{proof}
Using the semi-group property of the heat kernel we can write
\bm\nn
K(x,z) = \int_{\bbR^d} \Bigl( \e{\beta\Delta}(x,y) \e{\beta\Delta}(y,z) \\ - \e{\beta(\Delta-\frac 12 U)}(x,y) 
\e{\beta(\Delta-\frac 12 U)}(y,z)\Bigr) \dd y\,.
\end{multline}
Since $a b -c d \leq a(b-d)+b(a-c)$ for $a\geq c$ and $b\geq d$, $K(x,z)$ is bounded above by 
\bm\nn
 \int_{\bbR^d} \e{\beta\Delta}(x,y)\left( \e{\beta\Delta}(y,z) - \e{\beta(\Delta-\frac 12 U)}(y,z)\right) \dd y \\
+ \int_{\bbR^d} \e{\beta\Delta}(y,z) \left( \e{\beta\Delta}(x,y) -\e{\beta(\Delta-\frac 12U)}(x,y) \right) \dd y\,. 
\end{multline}
Using the bound $\e{\beta\Delta}(x,y)\leq (4\pi\beta)^{-d/2}$ the claim follows easily.
\end{proof}

\section{Conclusion}
We have given rigorous upper bounds on the critical temperature for
two- and three-dimensional Bose gases with repulsive two-body
interactions. In two dimensions, our bound agrees to leading order in
$a^2\rho$ with the expected critical temperature for superfluidity. In
three dimensions, our bound shows that the critical temperature is not
greater than the one for the ideal gas plus a constant times $\sqrt
{a\rho^{1/3}}$.

Our bounds are based on the observation that the one-particle reduced
density matrix decays exponentially if the fugacity $z$ satisfies
$z<1$. What is needed are lower bounds on the particle density in the
grand canonical ensemble.  The Feynman-Kac path integral
representation allows us to get bounds in terms of certain integral
kernels which, in turn, can be estimated by the scattering length of the
interaction potential using a suitable variational principle.

\medskip {\footnotesize {\bf Acknowledgments:} It is a pleasure to thank Rupert Frank
  and Elliott Lieb for many stimulating discussions, and Markus Holzmann for helpful comments on the physics literature. D.U. is grateful
  for the hospitality of ETH Z\"urich, the Center of Theoretical
  Studies of Prague, and the University of Arizona, where parts of
  this project were carried forward. Partial support by the US
  National Science Foundation grants PHY-0652356 (R.S) and
  DMS-0601075 (D.U.) is gratefully acknowledged.  }

\end{document}